\tikzset{elli/.style={ellipse,draw}}
\newsavebox\Car
\newsavebox\Tree
\theoremstyle{definition}
\theoremstyle{theorem}
\newtheorem{theorem}{Theorem}
\newtheorem{lemma}{Lemma}
\newtheorem{claim}{Claim}
\newcommand{\otheta}{\overline{\theta}}
\newcommand{\utheta}{\underline{\theta}}
\tikzset{
  treenode/.style = {align=center, inner sep=0pt, text centered,
    font=\sffamily},
  arn_n/.style = {treenode, circle, white, font=\sffamily\bfseries, draw=black,
    fill=black, text width=1.5em},% arbre rouge noir, noeud noir
  arn_r/.style = {treenode, circle, red, draw=red, 
    text width=1.5em, very thick},% arbre rouge noir, noeud rouge
  arn_x/.style = {treenode, rectangle, draw=black,
    minimum width=0.5em, minimum height=0.5em}% arbre rouge noir, nil
}
\newcommand{\ones}{\mathbf 1}
\newcommand{\reals}{{\mbox{\bf R}}}
\newcommand{\naturals}{{\mbox{\bf N}}}
\newcommand{\dist}{\mathop{\bf dist{}}}
\newcommand{\argmax}{\mathop{\rm argmax}}
\newcommand{\eg}{{\it e.g.}}
\newcommand{\ie}{{\it i.e.}}
\newcommand{\BEAS}{\begin{eqnarray*}}
\newcommand{\EEAS}{\end{eqnarray*}}
\newcommand{\BEA}{\begin{eqnarray}}
\newcommand{\EEA}{\end{eqnarray}}
\newcommand{\BEQ}{\begin{equation}}
\newcommand{\EEQ}{\end{equation}}
\newcommand{\BIT}{\begin{itemize}}
\newcommand{\EIT}{\end{itemize}}
\title{How much should you pay for restaking security?}
\author{Tarun Chitra\\ Gauntlet \\\texttt{\small tarun@gauntlet.xyz} \and Mallesh Pai\\ Rice University and SMG\\\texttt{\small mallesh.pai@mechanism.org}}
\begin{document}
\maketitle

\begin{abstract}
Restaking protocols have aggregated billions of dollars of security by utilizing token incentives and payments.
A natural question to ask is: How much security do restaked services \emph{really} need to purchase?
To answer this question, we expand a model of Durvasula and Roughgarden~\cite{naveen-placeholder} that includes incentives and an expanded threat model consisting of strategic attackers and users.
Our model shows that an adversary with a strictly submodular profit combined with strategic node operators who respond to incentives can avoid the large-scale cascading failures of~\cite{naveen-placeholder}.
We utilize our model to construct an approximation algorithm for choosing token-based incentives that achieve a given security level against adversaries who are bounded in the number of services they can simultaneously attack.
Our results suggest that incentivized restaking protocols can be secure with proper incentive management. 
\end{abstract}

\section{Introduction}
Decentralized networks, such as blockchains, rely on a combination of cryptography and economic incentives to corral disparate operators to maintain network services.
These operators, often referred to as node operators, bear the cost of running infrastructure and maintaining high network availability to guarantee network service level agreements.
In exchange, these operators receive a combination of a fixed subsidy (often termed a block reward) and a variable fee that is accrued based on network usage.

In these systems, properties such as safety (\ie~valid transactions cannot be removed from the network) and liveness (\ie~the network does not halt) are dependent on how many resources are committed by network participants.
For instance, Byzantine Fault Tolerant (BFT) Proof of Stake (PoS) networks guarantee safety and liveness only if at most $\nicefrac{1}{3}$ of resources committed by node operators are adversarial and/or dishonest.
As such, these networks need to continually pay fixed and variable payments to ensure that there is sufficient honest stake.
As protocols often compete with applications built on top of blockchains (such as decentralized finance or DeFi) for stake, there is a minimum payment needed to achieve security (see, \eg~\cite{Chitra2021Competitive, chitra2022improving, chitra2020stake}).
This has led to a state where blockchains are continually searching for new forms of yield to give to node operators in order to ensure secure, orderly operation. 

\paragraph{Restaking.}
As blockchains have evolved, there have been various forms of fees paid to node operators.
The majority of fees earned by node operators are transaction fees that users pay in order to have their transactions included within a block.
These fees are generally fixed and not proportional to the value of the transactions involved. 
Other fee mechanisms unique to blockchains such as miner extractable value (MEV)~\cite{chitra2022improving, kulkarni2023routing, daian2020flash} also exist.
These forms of fees involve strategic users taking advantage of non-strategic user transactions and generally provide additional yet highly variable yield.

A newer form of yield for node operators is \emph{restaking}, pioneered by Eigenlayer~\cite{eigen-wp, eigen-token}. Restaking involves node operators of an existing PoS network (referred to as the \emph{host network}) locking their stake into a smart contract.
The node operators provide services to the smart contract that are in addition to the services the host network requires (\eg~ensuring transaction validity, voting on block finalization, etc.).
If the node operator does not meet a covenant, their stake that is locked in the contract on the host network is slashed.
On the other hand, if the node operator provides services within a service-level agreement (SLA) defined by the contract, they receive incentive payments.
To demonstrate the diversity of restaking services, note that currently live Eigenlayer actively validated services for price oracles, Zero Knowledge proof generation, rollup sequencing, decentralized exchanges, and AI co-processing~\cite{u1-restaking}.

To illustrate payments and slashing, consider a price oracle service that requires node operators to provide a price from an off-chain venue (\ie~Coinbase) on every block.
If the node operator provides a price, they receive a portion of the revenue that price oracle smart contract receives.
On the other hand, if the node operator doesn't tender a price, then the operator can have their stake slashed.
Note that this slashing rule is in addition to existing host network slashing rules (\ie~an Ethereum or Solana node operator is slashed for not posting a block during a slot they are the proposer).
This shows that restaking can be thought of as a node operator earning extra fees by opting into excess risk from a service's slashing rules.

\paragraph{Risks of Restaking.}
Restaking has attracted over \$20 billion in capital in 2024 alone~\cite{defillama-restaking}, serving as one of the fastest capital formation events within the history of cryptocurrency.
Much of this capital formation has arisen because long-term Ethereum holders view restaking as a means for enhancing their PoS yield with minimal excess risk.
However, restaking poses extra risks to users due to services' slashing rules.
%As an example, suppose that two services have slashing events that are correlated in time, \ie~if service $A$ has a slashing event at time $t$, the probability of service $B$ having a slashing event at time $t+\Delta$ is increased.
%This can occur if,~\eg, a denial-of-service attack occurs against a single node running both services.
%In such circumstances, a user can have their slashes compound, leading to larger losses than one expects from independent Proof of Stake networks.
%
A specific novel risk arises from service pooling.
Service pooling refers to a single operator using the same stake to operate multiple services.
For instance, a node operator might lock up 100 ETH of stake into $k$ services.
If each service provides yield $\gamma_1, \ldots, \gamma_k$, then the node operator can earn up to $\sum_i \gamma_i$ yield.
On the other hand, if the user is slashed on \emph{any} service, then their total staked quantity goes down for all services.
If the same node operator is slashed on service $i$ for 10 ETH, then the operator has 90 ETH staked on all $k$ services.

Pooling shares slashing risk across all services with common node operators, which implies that services themselves indirectly bear risk from other services.
The worst-case outcome is a cascading attack. 
This is where slashes in one service impacts other services that are pooled via common node operators.
If a malicious node operator is willing to be slashed in order to earn a profit from corrupting the network, groups of services can be attacked sequentially until the entire network's stake is slashed.
The seminal work~\cite{naveen-placeholder} demonstrated that this can happen given an adversary can attack arbitrarily large groups of services and if the network is not sufficiently overcollateralized.

\paragraph{Prior Work.}
The Eigenlayer restaking network~\cite[App. B]{eigen-wp} was the first to address restaking risks.
This paper focused on computing how much honest stake is needed to secure a service $s$ that has a maximum profit from adversarial behavior, $\pi_s$.
While this paper provides a polynomial time algorithm for detecting if a network can be exploited given $\pi_s$, it does not provide any formal guarantees on the losses of stake under attacks.

Subsequent work~\cite{naveen-placeholder} considered the problem of measuring cascades by representing restaking networks with bipartite graphs.
These graphs represent the relationship between services and node operators.
Properties of this graph and $\pi_s$ can lead to cascades, with~\cite{naveen-placeholder} constructing an infinite family of graphs that have a worst-case cascade (\ie~all of the stake is destroyed via correlated slashes).
On the other hand, the paper proves that if the network is overcollateralized in a particular sense (see~\S\ref{sec:model}), then the size of the largest cascade decays.
However, it should be noted that the overcollateralization is global, \ie~services require overcollateralization that depends on arbitrarily numbers of other services.

Finally, there have been a number of works on analyzing the effect of token incentives to impact PoS network security.
These papers analyzed concentration of wealth effects~\cite{fanti2019compounding}, competition between PoS and application yield~\cite{Chitra2021Competitive}, and the principal-agent problem with liquid staking (which are also popular within restaking)~\cite{chitra2020stake, tzinas2023principal}.
These works are related to this paper as they analyze the interaction between economic incentives and network security.

\subsection{Our results}
We expand the model of~\cite{naveen-placeholder} in two main ways:
\begin{enumerate}
    \item Inclusion of incentives paid by services to attract node operators
    \item Expand the types of adversarial attacks possible
\end{enumerate}

\paragraph{Realistic Adversaries.}
We show that for a realistic adversary, which we term a \emph{strictly submodular adversary}, one can choose rewards to ensure that cascades are bounded.
These adversaries realize decreasing marginal returns for attacking larger sets of services.
Bounded cascade sizes imply adversaries cannot execute a sequence of attacks that leads to the entire network being slashed.
Such a bound is important for analyzing how a restaking network implicitly affects the security of its host PoS network.
We show that for such an adversary, the length of a cascade degrades to the minimum length as the number of services grows to infinity.
Our results provide a more optimistic view on security against cascading failures as it is substantially weaker than a global overcollateralization condition (\ie~$\gamma$-security~\cite[Thm. 1]{naveen-placeholder}; see Appendix~\ref{app:overlap}).

We note that our model of strictly submodular adversaries represents a realistic model where the cost of attacking multiple services grows as more services are attacked concurrently.
For instance, if an attacker needs to aggregate stake across $k$ services and has to purchase at least $\sigma$ units of stake for each service, they will push the price up of the staking asset in order to execute the attack.
This, in particular, will lower the profitability of the attack as $k$ increases, potentially restricting the number of services that can be attacked simultaneously.

\paragraph{Incentives and Strategic Operators.}
Our model relies on more than strictly submodular adversaries.
We also require node operators to rebalance or adjust which services they are restaking with.
Node operators are modeled as strategic, adjusting their allocation to services based on the expected profit they receive via service incentive payments.
This is also realistic given that liquid restaking protocols (who make up over 50\% of restaked capital~\cite{defillama-restaking}) employ strategies to optimize their allocation to services~\cite{walter-avs, neuder-chitra-LRT, gauntlet-lrt}.

These incentive payments can be viewed as analogous to block rewards that are paid out as a subsidy to attract stakers in PoS networks.
Akin to work on PoS networks that shows that block rewards need to be sufficiently high to ensure that networks have sufficient stake to avoid attacks~\cite{Chitra2021Competitive}, we demonstrate that with sufficiently high rewards, one can ensure that node operators rebalance in a manner that reduces cascades.
We note that services and liquid restaking tokens on Eigenlayer have already paid out tens of millions of dollars of incentives  far~\cite{llama-risk-rewards, eigenlayer-rewards}.

\paragraph{Threat Models and Algorithms for Optimal Incentives.}
One can view the choice of a submodular adversary as a choice of threat model for feasible attacks.
The main model of strictly submodular adversary studied within this paper is the $\ell_p$-adversary.
This adversary faces weighted $p$-norm costs for attacking $k$ services out of a set of $S$ possible services.
When an adversary has this profit function, it implicitly means that an adversary cannot attack more than $O(S^{1/p})$ services simultaneously.
Note that this implies that for $p \rightarrow \infty$, we degrade to the threat model of $S$ independent PoS networks (\ie~we assume an adversary can only attack 1 network at a time).

Our main result in Theorem 1 shows that there exist sufficiently high rewards (incentives) $r_s(p)$ that can be paid to each service $s$ to ensure that the cascade length is bounded under the assumption of $\ell_p$ adversaries. This result implies that services can individually and locally choose a risk tolerance (parametrized by $p \in (1, \infty)$ and pay rewards to ensure they have no large cascades. Given that submodular functions are known to have minima that are easy to approximate~\cite{alaei2021maximizing, patton2023submodular}, a natural question is if there exist algorithms for computing the optimal rewards to distribute given a choice of $p$.
We show that this is indeed possible in~\S\ref{sec:algorithms} and provide an approximate guarantee dependent on attack profitability and a choice of $p$.

% We predominantly study a particular type of strictly submodular adversary, the $\ell_p$-adversary with a more general result in the full version of this paper. 
% In particular, we add two main types of actions by network participants:
% \begin{itemize}
%     \item \emph{Costly Attack}: An adversary attacking a restaking network faces costs for attacking a set of $k$ services that increases with $k$
%     \item \emph{Rebalancing}: Node operators are assumed to be strategic and in response to an attack, reallocate their stake across existing services
% \end{itemize}
% One of the key features of the original model of~\cite{naveen-placeholder} is that any sequence of attacks $A_1, \ldots, A_n$ could be combined into a valid attack $\cup_{i=1}^n A_i$.
% This property can be viewed as allowing an attacker to attack an arbitrary number of services in a restaking network without facing costs as there is no difference between individually attacking services versus attacking the union.
% This type of adversary, in the worst case, does not allow for the node operators to adjust which services they operate since the adversary can attack the whole network at once if any feasible sequence covers all services.

\section{Model}\label{sec:model}
Analogous to~\cite{naveen-placeholder}, we define a restaking graph as a bipartite graph with associated profit, stake, and threshold functions.
These functions will be used to define what it means for a restaking network to be secure to cascading risks.
Our model generalizes that of~\cite{naveen-placeholder} in that we consider a larger set of profit functions and we introduce a notion of rewards that services can pay to node operators and costs that node operators face for operating a service.
Our model is sufficiently general to handle both deterministic costs (\ie~cost of running hardware) and probabilistic costs (\ie~cost of being slashed). 

\paragraph{Restaking Graphs.}
A \emph{restaking graph} $G = (S, V, E, \sigma, \pi, \alpha, f)$ consists of
\begin{itemize}
    \item Bipartite graph with vertex set $S \sqcup V$ where $S$ is the set of services\footnote{In Eigenlayer terminology, a service would be called an `actively validated service' (AVS)} and $V$ is the set of node operators
    \item An edge $(v, s) \in E \subset V \times S$ if node operator $v$ is a node operator for service $s$
    \item $\sigma \in \reals^V_+$ is the amount of stake that node operator $v \in V$ has in the network
    \item $\pi \in \reals_+^S$ is the maximum profit from corruption that can be realized for each service $s \in S$
    \item $\alpha \in [0, 1]^S$ is the threshold percentage of stake that needs to collude to corrupt the service $s \in S$ (\eg~$\alpha = \nicefrac{1}{3}$ is the threshold for a BFT service)
    \item $f : \reals_+^S \times 2^{S} \rightarrow \reals_+$ is a profit function, where $f(\pi, A)$ is the maximum profit that can be realized by corrupting all services $s \in A$ simultaneously
\end{itemize}
See Figure~\ref{fig:restaking-graph} for a picture  of a restaking graph.
Our definition generalizes~\cite{naveen-placeholder} since they restrict attention to the linear profit function $f(\pi, A) = \sum_{s\in A} \pi_s$.
For a node operator $v$, we define its neighbor set (or boundary) as $\partial v = \{ s : (v,s) \in E \}$.
Similarly, for a service $s$, we define its neighborhood as $\partial s = \{ v : (v, s) \in E\}$.
For each service, we define the total stake as service $s$, $\sigma_{\partial s}$ as
\[
   \sigma_{\partial s} = \sum_{v : (v,s) \in E} \sigma_v
\]
We define the set $D_{\psi}(G)$ as the set of coalitions of node operators with stake less than $\psi \in (0, 1)$ fraction of the total amount staked:
\[
D_{\psi}(G) = \left\{ D \subseteq V \bigg\vert \sum_{v \in D} \sigma_v \leq \psi \sum_{v\in V} \sigma_v \right\}
\]
For any set $D \subset V$ or set $A \subset S$, we will slightly abuse notation and write
\begin{align*}
\sigma_D = \sum_{v \in D} \sigma_v && \pi_A = \sum_{s \in A}\pi_s
\end{align*}
Finally, for any set $A \subset S$ and a vector $v \in \reals^S$, we denote by $v(A) \in \reals^A$ the restriction of $v$ to the coordinate in $A$ (and similarly for $B \subset V)$. 

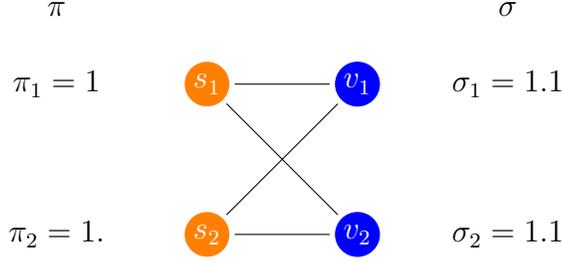
\begin{figure}
\begin {center}
\begin{tikzpicture}
  % Draw the column labels
  \node at (-2, 2) {\(\pi\)};
  \node at (4, 2) {\(\sigma\)};

  % Define the coordinates of the nodes
  \coordinate (A1) at (0,1);
  \coordinate (A2) at (0,-1);
  \coordinate (B1) at (2,1);
  \coordinate (B2) at (2,-1);

  % Draw the nodes with outer sep to create separation, white bold text
  \node[fill=orange, circle, inner sep=1.5pt, outer sep=2pt, text=white, font=\bfseries] (A1) at (0,1) {$s_1$};
  \node[fill=orange, circle, inner sep=1.5pt, outer sep=2pt, text=white, font=\bfseries] (A2) at (0,-1) {$s_2$};
  \node[fill=blue, circle, inner sep=1.5pt, outer sep=2pt, text=white, font=\bfseries] (B1) at (2,1) {$v_1$};
  \node[fill=blue, circle, inner sep=1.5pt, outer sep=2pt, text=white, font=\bfseries] (B2) at (2,-1) {$v_2$};

  % Draw the edges
  \draw (A1) -- (B1);
  \draw (A1) -- (B2);
  \draw (A2) -- (B1);
  \draw (A2) -- (B2);

  % Add labels for \pi_i and \sigma_i aligned vertically under \pi and \sigma
  \node at (-2, 1) {\(\pi_1= 1\)};
  \node at (-2, -1) {\(\pi_2 = 1.\)};
  \node at (4, 1) {\(\sigma_1 = 1.1\)};
  \node at (4, -1) {\(\sigma_2 = 1.1\)};
\end{tikzpicture}
\caption{
Example of a restaking graph $G = (S, V, E, \alpha, \sigma, \pi)$ with $S = \{s_1, s_2\}, V = \{v_1, v_2\}$, $E = \{(s_1, v_1), (s_1, v_2), (s_2, v_1), (s_2, v_2)\}$.
We consider $f(\pi, A) = \sum_{s\in A} \pi_s$ as the profit function.
Note that each individual service cannot be attacked here as $\pi_i < \sigma_i$ for $i \in \{1, 2\}$.
However, the set $S$ might be vulnerable since the profitability condition~\eqref{eq:costly-profit}, $\pi_1 + \pi_2 > \sigma_i$ holds for the potential attack $(\{s_1, s_2\}, \{v_i\})\subset S \times V$.
This attack is only valid, however, if $\sigma_i > \alpha_{s_j} (\sigma_1 + \sigma_2)$, which implies that we need to have $\alpha_{s_j} < \frac{1}{2}$ for $j\in \{1, 2\}$ for this to be an attack.  
So if $s_1, s_2$ were BFT protocols with $\alpha_s = \frac{1}{3}$, this graph would be insecure.
However, if it they were longest-chain protocols with $\alpha_s = \frac{1}{2}$, it would be secure.
}
\label{fig:restaking-graph}
\end{center}
\end{figure}

\paragraph{Security and Overcollateralization.}
A restaking graph $G$ has an $f$-\emph{attack} at $(A,B) \subset S \times V$ if:\footnote{These conditions were originally identified in the Eigenlayer whitepaper \cite{eigen-wp} for the special case that $f(\pi, A) = \pi_A$.}
\begin{align}
    f(\pi, A) &\geq \sum_{v\in B}\sigma_v = \sigma_B \label{eq:costly-profit}\\
    \forall s \in A: \;\; \sum_{v \in B \cap \partial s} \sigma_v &\geq \alpha_s \sum_{v \in \partial s} \sigma_v = \alpha_s \sigma_{\partial s} \label{eq:feasibility}
\end{align}
When the context is clear, we will refer to an $f$-attack simply as an attack:
\begin{itemize}
    \item We call~\eqref{eq:costly-profit} the \emph{profitability} condition for an attack, as it requires that the net profit from corruption of a set of services $A$ exceed the total amount staked by the attacking operators $B.$
    \item We refer to~\eqref{eq:feasibility} as the \emph{feasibility} condition for an attack as it represents a coalition $B \subset V$ having sufficient stake to execute an attack.
\end{itemize}
A graph is said to be \emph{secure} if there does not  exist an $f$-attack $(A, B) \subset S \times V$ .
See Figure~\ref{fig:restaking-graph} for an example of a graph that is secure for $\alpha_s \geq \frac{1}{2}$ and insecure otherwise.

A restaking graph is said to be \emph{$\gamma$-secure} if $G$ is secure and for all attacking coalitions $(A, B) \subset S \times V$ (\eg~where~\eqref{eq:feasibility} is feasible):
\begin{equation}\label{eq:gamma-def}
(1+\gamma)f(\pi, A) \leq \sum_{v\in B}\sigma_v = \sigma_B
\end{equation}
This is an overcollateralization condition providing a multiplicative gap between the profit over attacking $A$ and the stake held by $B$.

As per~\cite{naveen-placeholder}, given an attack $(A, B)$, we define the graph $G \searrow B$ to be the subgraph $G = (S - A, V - B, E - (S \times B \cup A \times V), \sigma(V-B), \pi(S-A), \alpha(S-A), f)$. 
This is simply the restaking graph where the services in $A$ and the node operators in $B$ are removed.
Note that we use a slightly different definition relative to~\cite{naveen-placeholder} in that we remove services that have been attacked to simplify the dynamics of our model.
A disjoint sequence $(A_1, B_1), \ldots, (A_T, B_T)$ is a \emph{cascading sequence of attacks} if for each $t \in [T]$, $(A_t, B_t)$ is a valid attack on $G \searrow B_1 \cdots \searrow B_{t-1}$.
We let $C(G)$ denote the set of sequences of valid attacks on a restaking graph $G$ and as per~\cite{naveen-placeholder}, we define the cascade coefficient $R_{\psi}(G)$ as
\[
R_{\psi}(G) = \psi + \max_{D \in D_{\psi}(G)} \max_{(A_1, B_1),\ldots,(A_T, B_T) \in C(G \searrow D)} \frac{\sigma_{\bigcup_{t=1}^T B_t}}{\sigma_V}
\]
One can interpret $R_{\psi}(G)$ as the maximum loss of stake that can occur if stake of at most $\psi$ is slashed or removed.
In Figure~\ref{fig:cascading-attack}, we show a cascading attack of length $T=4$ with $R_{1/V}(G) = 1$ that is inspired by~\cite[Theorem 7]{naveen-placeholder}.

\begin{figure}
\begin{center}

\begin{tikzpicture}
    % Define the distance between nodes and diagrams
    \def\dist{1.3} % Reduced the distance between nodes
    \def\nodesize{0.2}
    \def\rectwidth{\dist*6}
    \def\rectheight{0.5}
    \def\Lwidth{\dist*1.5}
    \def\Lheight{\dist*1.5}
    \def\xoffset{\dist*7}
    \def\yoffset{\dist*3}

    % Function to draw a single diagram
    \newcommand{\drawdiagram}[2]{
        \def\argone{#1}%
        \def\argtwo{#2}%

        \begin{scope}[shift={(#1, #2)}]
            % Draw the first row of nodes
            \foreach \i in {0, 1, 2, 3, 4, 5} {
                \node[circle, draw, fill=black, minimum size=\nodesize cm] at (\i*\dist, 0) {};
            }

            % Draw the second row of nodes
            \foreach \i in {0, 1, 2, 3, 4, 5} {
                \node[circle, draw, fill=black, minimum size=\nodesize cm] at (\i*\dist, -\dist) {};
            }

            % Draw blue rectangles around each row
            \draw[blue, thick] (-\nodesize, \rectheight/2) rectangle (\dist*5+\nodesize, -\rectheight/2);
            \draw[blue, thick] (-\nodesize, -\dist+\rectheight/2) rectangle (\dist*5+\nodesize, -\dist-\rectheight/2);

            % Draw red L-shaped boxes around triplets of nodes
            \foreach \i in {0, 3} {
                % L-shaped box starting at the top row
                \draw[red, thick] (\i*\dist-2*\nodesize, \nodesize) -- (\i*\dist-2*\nodesize, 0.25*\dist+\nodesize) -- (\i*\dist+0.5*\dist, 0.25*\dist+\nodesize) -- (\i*\dist+0.5*\dist, -\dist+2*\nodesize) -- (\i*\dist+1.4*\dist, -\dist+2*\nodesize) -- (\i*\dist+1.4*\dist, -\dist-2*\nodesize) -- (\i*\dist-2*\nodesize, -\dist-2*\nodesize) -- cycle;
            }
            \foreach \i in {1, 4} {
                % L-shaped box starting at the top row
                \draw[red, thick] (\i*\dist-2*\nodesize, \nodesize) -- (\i*\dist-2*\nodesize, 0.25*\dist+\nodesize) -- 
                (\i*\dist+\dist+2*\nodesize, 0.25*\dist+\nodesize) --
                (\i*\dist+\dist+2*\nodesize, -\dist-2*\nodesize) -- (\i*\dist+\dist-2*\nodesize, -\dist-2*\nodesize) -- (\i*\dist+\dist-2*\nodesize, -0.5*\dist+\nodesize) -- (\i*\dist-2*\nodesize, -0.5*\dist+\nodesize) -- cycle;
            }
        \end{scope}
    }

    % Draw the diagrams in two columns and three rows
    \foreach \row in {0, 1, 2} {
        \foreach \col in {0, 1} {
            \drawdiagram{\col*\xoffset}{-\row*\yoffset};
        }
    }

    % Psi Shock
    \draw[xshift=\xoffset, yshift=0, ultra thick, red] (\xoffset-2.5*\nodesize, 0.3) -- ++(0.4, -0.6);
    \draw[xshift=\xoffset, yshift=0, ultra thick, red] (\xoffset-2.5*\nodesize, -0.3) -- ++(0.4, 0.6);

    % (A_1, B_1)
    \draw[xshift=\xoffset, yshift=0, ultra thick, red] (\xoffset-2.5*\nodesize, 0.3-\yoffset) -- ++(0.4, -0.6);
    \draw[xshift=\xoffset, yshift=0, ultra thick, red] (\xoffset-2.5*\nodesize, -0.3-\yoffset) -- ++(0.4, 0.6);
    \draw[xshift=\xoffset, yshift=0, ultra thick, red] (\xoffset-2.5*\nodesize, 0.3-\dist-\yoffset) -- ++(0.4, -0.6);
    \draw[xshift=\xoffset, yshift=0, ultra thick, red] (\xoffset-2.5*\nodesize, -0.3-\dist-\yoffset) -- ++(0.4, 0.6);
    \draw[xshift=\xoffset, yshift=0, ultra thick, red] (\xoffset-2.5*\nodesize+\dist, 0.3-\dist-\yoffset) -- ++(0.4, -0.6);
    \draw[xshift=\xoffset, yshift=0, ultra thick, red] (\xoffset-2.5*\nodesize+\dist, -0.3-\dist-\yoffset) -- ++(0.4, 0.6);

    % (A_2, B_2)
    \draw[xshift=\xoffset, yshift=0, ultra thick, red] (-2.5*\nodesize, 0.3-\yoffset) -- ++(0.4, -0.6);
    \draw[xshift=\xoffset, yshift=0, ultra thick, red] (-2.5*\nodesize, -0.3-\yoffset) -- ++(0.4, 0.6);
    
    \foreach \i in {0, 1, 2, 3, 4, 5} { 
        \draw[xshift=\xoffset, yshift=0, ultra thick, red] (-2.5*\nodesize + \i*\dist, 0.3-\dist-\yoffset) -- ++(0.4, -0.6);
        \draw[xshift=\xoffset, yshift=0, ultra thick, red] (-2.5*\nodesize + \i*\dist, -0.3-\dist-\yoffset) -- ++(0.4, 0.6);
    }

    % (A_3, B_3)
    \foreach \i in {0, 3} {
        \draw[xshift=\xoffset, yshift=0, ultra thick, red] (-2.5*\nodesize+\i*\dist, 0.3-2*\yoffset) -- ++(0.4, -0.6);
        \draw[xshift=\xoffset, yshift=0, ultra thick, red] (-2.5*\nodesize+\i*\dist, -0.3-2*\yoffset) -- ++(0.4, 0.6);
    }
    
    \foreach \i in {0, 1, 2, 3, 4, 5} { 
        \draw[xshift=\xoffset, yshift=0, ultra thick, red] (-2.5*\nodesize + \i*\dist, 0.3-\dist-2*\yoffset) -- ++(0.4, -0.6);
        \draw[xshift=\xoffset, yshift=0, ultra thick, red] (-2.5*\nodesize + \i*\dist, -0.3-\dist-2*\yoffset) -- ++(0.4, 0.6);
    }

    % (A_4, B_4)
    \foreach \i in {0, 1, 2, 3, 4, 5} {
        \draw[xshift=\xoffset, yshift=0, ultra thick, red] (\xoffset-2.5*\nodesize+\i*\dist, 0.3-2*\yoffset) -- ++(0.4, -0.6);
        \draw[xshift=\xoffset, yshift=0, ultra thick, red] (\xoffset-2.5*\nodesize+\i*\dist, -0.3-2*\yoffset) -- ++(0.4, 0.6);
    }
    
    \foreach \i in {0, 1, 2, 3, 4, 5} { 
        \draw[xshift=\xoffset, yshift=0, ultra thick, red] (\xoffset-2.5*\nodesize + \i*\dist, 0.3-\dist-2*\yoffset) -- ++(0.4, -0.6);
        \draw[xshift=\xoffset, yshift=0, ultra thick, red] (\xoffset-2.5*\nodesize + \i*\dist, -0.3-\dist-2*\yoffset) -- ++(0.4, 0.6);
    }

    % Draw bold arrows between the figures
    \draw[->, ultra thick] (0.8*\xoffset, -\dist/2) -- (0.9*\xoffset, -\dist/2) node[midway, above] {$\psi$ loss};
    \draw[->, ultra thick] (\xoffset+2.5*\dist, -1.5*\dist) -- (\xoffset + 2.5*\dist, -2.5*\dist) node[midway, right] {$(A_1, B_1)$};
    \draw[<-, ultra thick] (0.8*\xoffset, -\dist/2-\yoffset) -- (0.9*\xoffset, -\dist/2-\yoffset) node[midway, above] {$(A_2, B_2)$};
    \draw[->, ultra thick] (2.5*\dist, -1.5*\dist-\yoffset) -- (2.5*\dist, -2.5*\dist-\yoffset) node[midway, left] {$(A_3, B_3)$};
    \draw[->, ultra thick] (0.8*\xoffset, -\dist/2-2*\yoffset) -- (0.9*\xoffset, -\dist/2-2*\yoffset) node[midway, above] {$(A_4, B_4)$};
\end{tikzpicture}

\caption{
An example of a cascading failure in a restaking network based on~\cite[Thm. 7, Figure 1]{naveen-placeholder}
In this sequence of figures, black dots represent validators $v_i, i \in [12]$ and the red and blue boxes containing $v_i$ represent services $s_i, i \in [6]$.
For this system, we have $\sigma_{v_i} = 1$ and $\alpha_{s_i} = 1$ for all $s_i, v_i$.
We have $\pi_s = 2$ for the services represented by red boxes and $\pi_s = 4$ for the services represented by the blue boxes.
One can view this as the hypergraph representation of a bipartite graph implicit in a restaking graph.
When we go from the upper left diagram to the upper right diagram, we first have a loss of $\psi = \frac{1}{12}$, which represents the loss of a single node's stake.
Losing this node's stake makes the left red box vulnerable as $\pi_s = 2\sigma_v$ when $s$ is a red box, which is represented when one goes from the upper right box to the middle right box in attack $(A_1, B_1)$.
This leads to the bottom row becoming vulnerable as $\pi_s = 4\sigma_v$ when $s$ is a blue box (which is the attack $(A_2, B_2)$.
This attack now leads the middle red service vulnerable and it is attacked in $(A_3, B_3)$.
Finally, the remaining nodes in the blue service in the top row are vulnerable and attacked in $(A_4, B_4)$, leading to $R_{1/12}(G) = 1$
}

\label{fig:cascading-attack}
\end{center}
\end{figure}

In~\cite{naveen-placeholder}, the authors show that when $f(\pi, A) = \sum_{s \in A} \pi_s$, one can have $R_{\psi}(G) = 1$, which represents the entire network being slashed in a cascading failure.
However, the authors also demonstrate that if $G$ is $\gamma$--secure, then we have
\begin{equation}\label{eq:naveen-cascade}
R_{\psi}(G) \leq \left(1 + \frac{1}{\gamma}\right)\psi
\end{equation}
This demonstrates that networks are safer as they increase their overcollateralization.
Note, however, that $\gamma$ is a global variable for overcollateralization (\ie~every service needs to be overcollateralized by at least this amount) which can make service profitability difficult.
Finally, we note~\cite[Theorem 1, Corollary X]{naveen-placeholder},~\cite[App. B]{eigen-wp} that a computationally efficient sufficient condition for Eigenlayer nodes to be $\gamma$-secure is: $\forall v \in V, \;(1+\gamma) \sum_{s\in \partial v} \frac{\pi_s}{\alpha_s\sigma_{\partial s}} \leq 1$.
Note that condition need not hold for all $\gamma$-secure restaking graphs.

\paragraph{Overlap.}
One of the main advantages of restaking systems is the ability to reuse capital to secure multiple services.
This allows node operators to earn yield from many sources and increase their overall profitability.
However, on the flip side, this also increases the profit from attack.
As a simple example, consider the restaking graph of Figure~\ref{fig:restaking-graph}.
In this graph, we have two services $a$ and $b$ with $\pi_a = \pi_b = 1$.
Moreover, there is a single node operator $\sigma_1$ who is validating both services $a$ and $b$.
As $\sigma_1 = 1.1$, then on their own, neither service $a$ nor $b$ are attackable as $\pi_i - \sigma_1 < 0$ for $i \in \{a, b\}$.
However, if $f(\pi, A) = \sum_{s\in A} \pi_s$, then it is profitable to attack both $a$ and $b$ simultaneously since $\pi_a + \pi_b - \sigma_1 > 0$.
On the other hand, if there were two node operators with $\sigma_1 = \sigma_2 = 1.1$ with $\sigma_1$ operating $a$ and $\sigma_2$ operating $b$, then there is no viable attack.
This example demonstrates that when two services share stake operated by the same operator, they increase the profitability of attacking both services simultaneously.

One can view the risk of being attacked as related to the \emph{overlap} of stake between services $s$ and $t$.
We define the overlap between $s$ and $t$, $\theta_{s,t}$ as $\theta_{s,t} = \sigma_{\partial s \cap \partial t}$ where $\partial s \cap \partial t \subset V$ is the set of operators validating both services.
In particular, as two services have higher overlap, the profit from attacking them together (as in our example) goes up.
For each service $s \in S$, we define the minimum and maximum overlap, $\utheta_s, \otheta_s$, as 
$\utheta_s = \min_{t\in S/\{s\}} \theta_{s, t}, \otheta_s = \max_{t\in S/\{s\}} \theta_{s, t}$.

In Appendix~\ref{app:overlap}, we demonstrate an example of a graph whose maximum cascade length is controlled completely by the overlap.
For this graph, if one required $\gamma$-security to hold, we demonstrate that one would need $\Omega(\sum_{s\in S} \pi_s)$ more stake than necessary to have small cascade length.
In a scenario where one service has a profit $\pi_1 = 1$ and another service has $\pi_2 = 1,000$, this implies that $\pi_1$ has to aggregate stake on the order of $1,000$ times more stake to be $\gamma$-secure than if they were an isolated network.
This example demonstrates that $\gamma$-security is often a highly inefficient means of overcollateralizing a restaking system to avoid cascades.

One of the main insights of this paper is that if one can control the overlap suitably and rational node operators can adjust their stake in response to an attack, then one can bound cascade length without needing a condition as strong as being $\gamma$-secure.
In the sequel, we demonstrate that control over the minimum and maximum overlap between services allows one to bound the profitability of an attack.
For instance, when the profitability is sublinear in $|A|$, then constraints on the overlap between services upper and lower bound $f(\pi, A) - \sigma_B$.
Our results in \S\ref{sec:main-results} show that these bounds allow for incentives to be used to control over the overlap and $R_{\psi}(G)$.

\paragraph{Strictly Submodular Adversaries.}
We say that a restaking graph $G = (S, V, E, \sigma, \pi, \alpha, f)$ is strictly submodular if $f(\pi, A)$ is a strictly submodular function of $A$,\ie~$,
f(\pi, A\cup A') + f(\pi, A\cap A') < f(A) + f(A')$
for all $A, A' \subset S$ with $A \not\subset A', A' \not\subset A$.
A submodular adversary can be viewed as an adversary who faces costs for simultaneously attacking many services.
Note that the linear payoff function $f(\pi, A) = \sum_{s \in A} \pi_s$ of~\cite{naveen-placeholder} does not yield a strictly submodular restaking graph.
Strictly submodular restaking graphs can be thought of as those where attackers face a cost that increases with the number of services $|A|$ they are trying to concurrently attack.
For instance, if an attacker faces an acquisition cost for acquiring stake in each service they attack, they could have a profit that is sublinear in $|A|$.
We say that an attack is a \emph{costly attack} if $f$ is strictly submodular.

For example, suppose an adversary faces a multiplicative cost $c(\pi, A)$ for attacking services $A \subset S$, \ie~$f(\pi, A) = c(\pi, A) \left(\sum_{s \in A} \pi_s\right)$.
If for all $s$, $c(\pi, A) = \Theta\left(\frac{1}{|A|^c}\right)$ for $c$ (\ie~non-constant cost, increasing in $A$), then
$f(\pi, A) = O(|A|^{1-c})$, which is strictly submodular~\cite[\S2]{patton2023submodular}.
Such a multiplicative cost might arise if one has an additively supermodular cost (which can be thought of as a strictly convex cost).

This is also equivalent to having diminishing returns for attacking multiple services.
This occurs when the costs for acquiring stake to attack different services are correlated.
For instance, acquiring stake to attack service $A$ can cause the price of acquiring stake to attack service $B$ to increase.
Such non-trivial costs growing with the maximum profit exist in other blockchain attacks: oracle manipulation~\cite{mackinga2022twap, aspembitova2022oracles}, intents~\cite{chitra2024analysis}, transaction fee manipulation~\cite{yaish2023speculative, chung2024collusion}, liquid staking manipulation~\cite{chitra2020stake,tzinas2023principal} and time-bandit attacks~\cite{yaish2022blockchain}. 

We also note that if $f(\pi, A)$ is a weighted $\ell_p$-norm of $A$, then it is strictly submodular if $p > 1$~\cite[Obs. 2.1]{patton2023submodular}.
The majority of this paper will focus on studying the $p$-norm profit, $f_p(\pi, A)$:
\begin{equation}\label{eq:lp-profit}
f_p(\pi, A) = \left(\sum_{s\in A} \pi_s^p\right)^{1/p}
\end{equation}

We note that for strictly submodular adversaries, Corollary 1 of~\cite{naveen-placeholder} does not hold.
When $f(\pi, A) = \sum_{s \in A} \pi_s$, this corollary says that if $(A_1, B_1), \ldots, (A_T, B_T)$ is a valid attack sequence then $(\bigcup_t A_t, \bigcup_t B_t)$ is a valid attack sequence.
We demonstrate that this is not true for $p$-attacks in Appendix~\ref{app:unions}.

\paragraph{Incentives.}
We say a restaking graph is \emph{incentivized} if there are two additional functions:
\begin{itemize} 
    \item $r \in \reals^S_+$: $r_s$ is the reward paid out by service $s$ to node operators $\partial s$ 
    \item $c \in \reals^S_+$: $c_s$ is the cost to an operator of operating service $s$
\end{itemize}
Costs should be thought of as including both the cost of operating the service and the losses faced from not conforming to the protocol (\ie~slashes).
Given a reward $r_s$, each node operator $v$ with an edge $(v, s) \in E$ receives a reward $\rho_{sv} = \beta_{sv} r_s$.
If $(v, s) \not\in E$, we define $\rho_{sv} = 0$.
A service $s$ is \emph{pro-rata} if $\beta_{sv} = \frac{\sigma_v}{\sigma_{\partial s}}$, \ie~$
\rho_{sv} = r_s \frac{\sigma_v}{\sum_{v : (v,s) \in E} \sigma_v} = r_s \frac{\sigma_v}{\sigma_{\partial s}}$.
The majority of reward systems within decentralized networks and cryptocurrencies follow a pro-rata reward distribution~\cite{johnson2023concave, chen2019axiomatic} and this is the main form of incentive studied in this paper.
Finally, we say an incentivized restaking graph is \emph{dynamic} if $r(t) \in \reals^S_+, \sigma(t) \in \reals^V_+, E_t \subset 2^{S \times V}$ are updated sequentially for rounds $t \in \naturals$.

A node operator is \emph{profitable without impact} at a service $s$ if $\rho_{sv} - c_s = r_s \frac{\sigma_v}{\sigma_{\partial s}} - c_s \geq 0$ which is equivalent to:
\begin{equation}\label{eq:profitable-no-impact}
\sigma_v \geq \frac{c_s \sigma_{\partial s}}{r_s}
\end{equation}
However, a node operator also needs to measure the impact of adding their own stake when computing profitability.
Recall that upon depositing $\sigma_v$ into service $s$, $\sigma_{\partial s}$ will increase by $\sigma_v$.
A node operator is said to be \emph{profitable with impact} if $\sigma_v \geq \frac{c_s(\sigma_{\partial s} + \sigma_v)}{r_s}$ which can be rewritten as
\begin{equation}\label{eq:profitable-impact}
\sigma_v \geq \frac{c_s \sigma_{\partial s}}{r_s - c_s}
\end{equation}

\paragraph{Rebalancing.}
Consider a node operator $v \in V$ who is validating a set of services $\partial v(0) \subset S$ at time $t = 0$.
We say that $v$ is \emph{strategic} if given a dynamic, incentivized staking graph $G_t$ and an attack sequence $(A_1, B_1), \ldots, (A_T, B_T)$, they have $\partial v(t) = \mathsf{Update}(\partial v(t-1), \sigma_{\partial s}, r_s, c_s)$ after each attack $(A_t, B_t)$.
A strategic node operator can be thought of one that updates the services the validate in response to change in yield after $t$ attacks, $\rho_{sv}(t)$.
If that the function $\mathsf{Update} : 2^S \times \reals_+ \times \reals_+ \times \reals_+ \rightarrow 2^S$ is non-constant, we say that a node operator is \emph{strategically rebalancing}.

If we view $\partial v(t)$ as a portfolio of financial assets, a strategically rebalancing node operator is one who aims to maximize their expected value as the restaking graph changes.
For instance, if a slashing event reduces competition at a service $s$ so that $\rho_{sv}(t) > \rho_{sv}(t-1)$, then a strategically rebalancing node operator can add $s$ to $\partial v$ to earn more incentives.
When rebalancing is present, we consider a three-step iterated game between services, node operators, and attackers given an attack sequence:
\begin{enumerate}
\item Services choose rewards $r_s(t)$ to offer node operators
\item Attackers execute a single attack $(A_i, B_i) $
\item Strategically rebalancing node operators update the services they are validating, $\partial v(t)$
\end{enumerate}
We note that one can view this as an online Stackelberg security game with the services acting as leader and the attacker and node operators as followers~\cite{sinha2018stackelberg, balcan2015commitment}.

We construct an explicit example of rebalancing arresting a cascading attack in Figure~\ref{fig:rebalance}.
We take the same example as in Figure~\ref{fig:cascading-attack} and show that if a node operator can strategically rebalance and rewards are sufficiently high, they can halt a cascading attack.
In this example, we are able to reduce an attack $(A_1, B_1), \ldots, (A_T, B_T)$ to an attack $(A_1, B_1)$.
The main results of~\S\ref{sec:main-results} construct sufficient conditions for how high rewards need to be in order to halt an attack and force it to have length 1.

\begin{figure}
\begin{center}

\begin{tikzpicture}
    % Define the distance between nodes and diagrams
    \def\dist{1.3}
    \def\nodesize{0.2}
    \def\rectwidth{\dist*6}
    \def\rectheight{0.5}
    \def\Lwidth{\dist*1.5}
    \def\Lheight{\dist*1.5}
    \def\xoffset{\dist*7}
    \def\yoffset{\dist*3}

    % Function to draw a single diagram
    \newcommand{\drawdiagram}[2]{
        \begin{scope}[shift={(#1, #2)}]
            % Draw the first row of nodes
            \foreach \i in {0, 1, 2, 3, 4, 5} {
                \node[circle, draw, fill=black, minimum size=\nodesize cm] at (\i*\dist, 0) {};
            }

            % Draw the second row of nodes
            \foreach \i in {0, 1, 2, 3, 4, 5} {
                \node[circle, draw, fill=black, minimum size=\nodesize cm] at (\i*\dist, -\dist) {};
            }

            % Draw blue rectangles around each row
            \draw[blue, thick] (-\nodesize, \rectheight/2) rectangle (\dist*5+\nodesize, -\rectheight/2);
            \draw[blue, thick] (-\nodesize, -\dist+\rectheight/2) rectangle (\dist*5+\nodesize, -\dist-\rectheight/2);

            % Draw red L-shaped boxes around triplets of nodes
            \foreach \i in {0, 3} {
                % L-shaped box starting at the top row
                \draw[red, thick] (\i*\dist-2*\nodesize, \nodesize) -- (\i*\dist-2*\nodesize, 0.25*\dist+\nodesize) -- (\i*\dist+0.5*\dist, 0.25*\dist+\nodesize) -- (\i*\dist+0.5*\dist, -\dist+2*\nodesize) -- (\i*\dist+1.4*\dist, -\dist+2*\nodesize) -- (\i*\dist+1.4*\dist, -\dist-2*\nodesize) -- (\i*\dist-2*\nodesize, -\dist-2*\nodesize) -- cycle;
            }
            \foreach \i in {1, 4} {
                % L-shaped box starting at the top row
                \draw[red, thick] (\i*\dist-2*\nodesize, \nodesize) -- (\i*\dist-2*\nodesize, 0.25*\dist+\nodesize) -- 
                (\i*\dist+\dist+2*\nodesize, 0.25*\dist+\nodesize) --
                (\i*\dist+\dist+2*\nodesize, -\dist-2*\nodesize) -- (\i*\dist+\dist-2*\nodesize, -\dist-2*\nodesize) -- (\i*\dist+\dist-2*\nodesize, -0.5*\dist+\nodesize) -- (\i*\dist-2*\nodesize, -0.5*\dist+\nodesize) -- cycle;
            }
        \end{scope}
    }

    % Draw the diagrams in two columns and three rows
    % Draw the diagrams in two columns and three rows
    % \foreach \row in {0, 1} {
    %     \foreach \col in {0, 1.1} {
    %         \ifthenelse{{\row=1 \and \col=0}}{}
    %         {\drawdiagram{\col*\xoffset}{-\row*\yoffset}};
    %     };
    % }
    \drawdiagram{0}{0};
    \drawdiagram{1.1*\xoffset}{0};
    \drawdiagram{1.1*\xoffset}{-1*\yoffset};

    % Psi Shock
    \draw[xshift=\xoffset, yshift=0, ultra thick, red] (1.1*\xoffset-2.5*\nodesize, 0.3) -- ++(0.4, -0.6);
    \draw[xshift=\xoffset, yshift=0, ultra thick, red] (1.1*\xoffset-2.5*\nodesize, -0.3) -- ++(0.4, 0.6);

    % (A_1, B_1)
    \draw[xshift=\xoffset, yshift=0, ultra thick, red] (1.1*\xoffset-2.5*\nodesize, 0.3-\yoffset) -- ++(0.4, -0.6);
    \draw[xshift=\xoffset, yshift=0, ultra thick, red] (1.1*\xoffset-2.5*\nodesize, -0.3-\yoffset) -- ++(0.4, 0.6);
      
    \foreach \i in {0, 1} { 
        \draw[xshift=\xoffset, yshift=0, ultra thick, red] (1.1*\xoffset-2.5*\nodesize + \i*\dist, 0.3-\dist-\yoffset) -- ++(0.4, -0.6);
        \draw[xshift=\xoffset, yshift=0, ultra thick, red] (1.1*\xoffset-2.5*\nodesize + \i*\dist, -0.3-\dist-\yoffset) -- ++(0.4, 0.6);
    }

    % (A_1, B_1) still dead after rebal.
    \draw[xshift=\xoffset, yshift=0, ultra thick, red] (-2.5*\nodesize, 0.3-\yoffset) -- ++(0.4, -0.6);
    \draw[xshift=\xoffset, yshift=0, ultra thick, red] (-2.5*\nodesize, -0.3-\yoffset) -- ++(0.4, 0.6);
    
    \foreach \i in {0, 1} { 
        \draw[xshift=\xoffset, yshift=0, ultra thick, red] (-2.5*\nodesize + \i*\dist, 0.3-\dist-\yoffset) -- ++(0.4, -0.6);
        \draw[xshift=\xoffset, yshift=0, ultra thick, red] (-2.5*\nodesize + \i*\dist, -0.3-\dist-\yoffset) -- ++(0.4, 0.6);
    }

    % Draw bold arrows between the figures
    \draw[->, ultra thick] (0.85*\xoffset, -\dist/2) -- (\xoffset, -\dist/2) node[midway, above] {$\psi$ loss};
    \draw[->, ultra thick] (\xoffset+2.5*\dist, -1.5*\dist) -- (\xoffset + 2.5*\dist, -2.5*\dist) node[midway, right] {$(A_1, B_1)$};
    \draw[<-, ultra thick] (0.85*\xoffset, -\dist/2-\yoffset) -- (\xoffset, -\dist/2-\yoffset) node[midway, above] {Rebalance};

    % Draw rebalance figure
    \begin{scope}[shift={(0, -\yoffset)}]
            % Draw the first row of nodes
            \foreach \i in {0, 1, 2, 3, 4, 5} {
                \node[circle, draw, fill=black, minimum size=\nodesize cm] at (\i*\dist, 0) {};
            }

            % Draw the second row of nodes
            \foreach \i in {0, 1, 2, 3, 4, 5} {
                \node[circle, draw, fill=black, minimum size=\nodesize cm] at (\i*\dist, -\dist) {};
            }

            % Draw blue rectangles around each row
            \draw[blue, thick] (-\nodesize, \rectheight/2) rectangle (\dist*5+\nodesize, -\rectheight/2);
            %\draw[blue, thick] (-\nodesize, -\dist+\rectheight/2) rectangle (\dist*5+\nodesize, -\dist-\rectheight/2);
            \draw[yellow, ultra thick] (6*\dist-2*\nodesize,-\dist-3*\nodesize) -- 
            (1.5*\dist,-\dist-3*\nodesize) --
            (1.5*\dist,-\dist/2) --
            (0.6*\dist,-\dist/2) --
            (0.6*\dist,4*\nodesize) --
            (1.3*\dist+2.5*\nodesize, 4*\nodesize) --
            (1.3*\dist+2.5*\nodesize, -\dist/2) --
            (6*\dist-2*\nodesize, -\dist/2) -- cycle;

            % Draw red L-shaped boxes around triplets of nodes
            \foreach \i in {0, 3} {
                % L-shaped box starting at the top row
                \draw[red, thick] (\i*\dist-2*\nodesize, \nodesize) -- (\i*\dist-2*\nodesize, 0.25*\dist+\nodesize) -- (\i*\dist+0.5*\dist, 0.25*\dist+\nodesize) -- (\i*\dist+0.5*\dist, -\dist+2*\nodesize) -- (\i*\dist+1.4*\dist, -\dist+2*\nodesize) -- (\i*\dist+1.4*\dist, -\dist-2*\nodesize) -- (\i*\dist-2*\nodesize, -\dist-2*\nodesize) -- cycle;
            }
            \foreach \i in {1, 4} {
                % L-shaped box starting at the top row
                \draw[red, thick] (\i*\dist-2*\nodesize, \nodesize) -- (\i*\dist-2*\nodesize, 0.25*\dist+\nodesize) -- 
                (\i*\dist+\dist+2*\nodesize, 0.25*\dist+\nodesize) --
                (\i*\dist+\dist+2*\nodesize, -\dist-2*\nodesize) -- (\i*\dist+\dist-2*\nodesize, -\dist-2*\nodesize) -- (\i*\dist+\dist-2*\nodesize, -0.5*\dist+\nodesize) -- (\i*\dist-2*\nodesize, -0.5*\dist+\nodesize) -- cycle;
            }
        \end{scope}
    
\end{tikzpicture}

\caption{
A cascading attack halted by a rebalance.
This is the same example as in Figure~\ref{fig:cascading-attack}, except that rewards $r_s$ are chosen sufficiently high such that after the validator in the upper left corner is slashed, the adjacent validator in the upper blue service (\ie~second from the left), joins the lower blue service.
This leads to the cascading failure not being viable in that $(A_2, B_2)$ is no longer a valid attack after rebalancing.
One can view this as an equilibrium condition: if we start in equilibrium, then every $(v, s) \in E$ must be profitable with impact.
This implies that if they are slashed, then another validator with at most the same stake can profitably join the service after they're slashed (which is what the validator second from the left in the top row is doing).
}

\label{fig:rebalance}
\end{center}
\end{figure}

%\TC{Do we need a more explicit example of rebalancing for particular $p$?}

\section{Costly Attacks and Strategic Operators}\label{sec:main-results}
Given the model of the previous section, we are now ready to describe our main results.
We consider a weaker adversary and a stronger operator than~\cite{naveen-placeholder} in that we consider a dynamic, incentivized restaking graph $G_t = (S, V, E_t, \alpha, \sigma_t, \pi, f, r(t), c)$ where $f$ is strictly submodular.
For the remainder of this section, we will consider 
the $G^p_t = (S, V, E_t, \alpha, \sigma_t, \pi, f_p, r(t), c)$ where $f_p(\pi, A)$ is defined in~\eqref{eq:lp-profit}.

We will term a costly attack for $G_t^p$ a \emph{$p$-attack}. 
We prove similar results to those in this section for strictly submodular functions, where the analog of the parameter $p$ is curvature of a submodular function~\cite{sviridenko2017optimal}, in the full version of this paper.
Our main result can be stated as:
\begin{theorem}\label{thm:main}
Consider $G^p_t$ with strategically rebalancing node operators.
There exist rewards $r(t) \in \reals^S_+$ such that for a constant $C > 0$ we have
\[
R_{\psi}(G) \leq \psi + \frac{C}{S^{1-\frac{1}{p}}}
\]
The rewards $r(t)$ are local in that the optimal choice of $r_s(t)$ is only a function of $\partial s(t)$ and any service $s' \in S$ such that $\partial s'(t) \cap \partial s(t) \neq \emptyset$
\end{theorem}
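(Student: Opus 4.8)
The strategy is to separate a \emph{static} estimate, namely how much stake a single $p$-attack can destroy, which falls straight out of the $\ell_p$ structure, from a \emph{dynamic} one, namely that for the right choice of rewards strategic rebalancing heals every surviving service after an attack, so that the cascade never gets a second round and the static estimate is incurred only once.

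\textbf{Static estimate.} For any valid $p$-attack $(A,B)$ on any residual graph, the profitability condition~\eqref{eq:costly-profit} gives
\[
\sigma_B \;\le\; f_p(\pi,A) \;=\; \Bigl(\sum_{s\in A}\pi_s^p\Bigr)^{1/p} \;\le\; (\max_{s}\pi_s)\,|A|^{1/p}\;\le\;(\max_{s}\pi_s)\,S^{1/p}.
\]
I would record this together with the companion fact that, for a cascade of length $T$, concavity of $x\mapsto x^{1/p}$ and $\sum_{t}|A_t|\le S$ (attacked services are deleted) give $\sum_{t}\sigma_{B_t}\le(\max_s\pi_s)\,T^{1-1/p}S^{1/p}$; thus any constant bound on $T$ preserves the advertised $S^{-(1-1/p)}$ rate after dividing by $\sigma_V$, with $C$ equal to $(\max_s\pi_s)\,S/\sigma_V$ up to a bounded factor, an $O(1)$ quantity under the natural normalization in which aggregate stake scales linearly in the number of services.

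\textbf{Dynamic estimate.} I would pick the rewards $r(t)$ large enough that the graph rests in an equilibrium in which every edge $(v,s)\in E_t$ is profitable with impact, i.e.\ satisfies~\eqref{eq:profitable-impact}. The point of this choice, exactly as in Figure~\ref{fig:rebalance}, is a backfilling phenomenon: when an attack removes a slashed operator $v$ from a surviving service $s$, the reward rate $\rho_{sv'}=r_s\sigma_{v'}/\sigma_{\partial s}$ at $s$ rises, and any operator $v'$ — currently at $s$ or at a service overlapping $s$ — with $\sigma_{v'}\ge\sigma_v\cdot\tfrac{r_s-2c_s}{r_s-c_s}$ finds it profitable to move its stake into $s$. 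Since~\eqref{eq:profitable-impact} controls $\sigma_v$ through the local data at $s$, and the overlaps $\theta_{s,s'}$ bound the stake an attacker can assemble inside $\partial s$ (upper- and lower-bounding $f_p(\pi,A)-\sigma_B$ via the minimum and maximum overlap, as in \S\ref{sec:model}), one can calibrate $r_s$ using only $\partial s(t)$ and its one-hop neighbors $\{s':\partial s'(t)\cap\partial s(t)\ne\emptyset\}$ so that backfilling restores every weakened surviving service at least to its pre-attack stake. No surviving service loses stake after rebalancing, so the rebalanced residual graph $G\searrow B_1$ inherits security from the equilibrium graph that preceded the attack; hence no further $p$-attack exists, $T\le 1$, and $R_\psi(G)\le\psi+(\max_s\pi_s)S^{1/p}/\sigma_V=\psi+C/S^{1-1/p}$. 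The locality claim is immediate from this construction: every ingredient of $r_s(t)$ — the stakes in $\partial s(t)$ and hence $\sigma_{\partial s}$, the parameters $\alpha_s,\pi_s,c_s$, the stakes of the candidate backfillers, and the overlaps $\theta_{s,s'}$ — depends only on $\partial s(t)$ and on services $s'$ with $\partial s'(t)\cap\partial s(t)\ne\emptyset$.

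\textbf{Main obstacle.} The crux is the dynamic estimate, and in particular making backfilling uniform. A single attack can weaken several surviving services at once, so one must produce enough reallocatable nearby stake to heal all of them; I expect this to follow from the equilibrium condition itself, since~\eqref{eq:profitable-impact} forces every incident operator to be small relative to the services it secures and hence replaceable. One must also verify that rebalancing never renders a previously safe service attackable (it should not, since rebalancing only adds stake), and fix the order of moves so that the $\psi$-loss preceding the first attack is either healed before the attacker acts or is absorbed into the $\psi$ term — being prepared to weaken ``length exactly $1$'' to ``length $O(1)$'', using the shrinking residual service set and the concavity bound above, if one round of rebalancing cannot be guaranteed to heal everything.
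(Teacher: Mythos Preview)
Your high-level split into a static single-attack estimate and a dynamic ``cascade has length at most one'' argument matches the paper, and your static bound $\sigma_B\le(\max_s\pi_s)S^{1/p}$ is exactly Step~1. The dynamic half, however, diverges from the paper and has a genuine gap. Your backfilling mechanism aims to show that rebalancing restores every surviving service to its pre-attack stake, so the residual graph ``inherits security'' from the equilibrium graph. But the graph immediately preceding $(A_1,B_1)$ is $G\searrow D$, not $G$: the $\psi$-loss has already occurred, and $G\searrow D$ is \emph{not} secure --- that is precisely why $(A_1,B_1)$ was feasible. So even perfect restoration of the $B_1$-losses does not rule out a second attack; you flag this but do not resolve it, and the fallback of also healing the $\psi$-loss is not supported by the profitability inequality you sketch (which only produces backfillers of stake comparable to a single slashed operator, not an arbitrary shortfall).

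The paper avoids this by never arguing that the rebalanced graph is globally secure. It instead enforces two-sided overlap bounds $\theta_{s,t}=\Theta(S^{1/p})$ and uses them, via an inclusion--exclusion lower bound on $\sigma_B$ (Step~2), to force $|A|=o(S)$ for \emph{every} $p$-attack. This makes $\sigma_{B_2}\le(\max_s\pi_s)|A_2|^{1/p}=O(S^{o(1/p)})$ small relative to the rebalanced stake $\sigma_{\partial s-B_1+D_s}=\Omega(S^{1/p})$, so~\eqref{eq:feasibility} fails for $(A_2,B_2)$ directly (Step~3). The overlap bounds are not incidental: they are themselves enforced by operator-discriminating rewards $r_{sv}=(1-\delta_s\mathbf{1}_{v\in\partial\tau(s)})r_s$ (Step~6), and that construction is what makes the locality assertion precise. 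Your proposal mentions overlap only in passing and neither bounds $|A|$ nor designs rewards to control $\theta_{s,t}$; without those two ingredients the infeasibility of $(A_2,B_2)$ does not follow. A smaller point: your dismissal ``rebalancing only adds stake'' is too quick --- the added operators $D_s$ may themselves lie in an attacking coalition $\tilde B$, so $\sigma_{D_s\cap\tilde B}$ could in principle tip a previously infeasible attack into feasibility; the paper handles this separately (Step~4) via a stronger lower bound on $r_s/c_s$.
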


\paragraph{Outline of Proof of Theorem~\ref{thm:main}.}
We prove this theorem in a sequence of steps:
\begin{enumerate}
\item We first show that for any costly attack $(A, B)$, one has
\[
|B| \leq K S^{1/p}
\]
for $K = \frac{\max_s \pi_s}{\min_v \sigma_v}$.
Note that $\min_v \sigma_v$ is a control parameter that a restaking protocol designer can choose (\ie~analogous to the 32 ETH that are needed to stake in Ethereum mainnet).

\item We next show that if $\forall s,t \in S,\;\theta_{s,t} = \Theta(S^{1/p})$ , then $|A| = o(S)$
\item For any costly attack sequence $(A_1, B_1), \ldots, (A_T, B_T) \in C(G^p_0)$, we show that if $|A_t| = o(S)$, $r_s = \Omega(S^{1/p})$, then rebalancing node operators only allow $(A_1, B_1)$ to be a valid attack.
That is, after the first attack $(A_1, B_1)$ is executed, rebalancing node operators place sufficient stake on services in a manner that makes $(A_2, B_2)$ an invalid attack on $G^p_1$.

\item Next, we show that given high enough rewards, rebalanced stake cannot cause further attacks.
That is, rebalancing cannot cause an attack that was infeasible prior to rebalancing to be feasible post rebalancing. 

\item Given these sufficiently high rewards, we also show that in the presence of strategically rebalancing node operators and if $V > S$, we have $\sigma_V \geq \left(\min_v \sigma_v\right) S$ which implies that cascades are bounded as:
\begin{equation}\label{eq:r-psi}
R_{\psi}(G) \leq \psi + \max_{D \in D_{\psi}(G)} \max_{(A_1, B_1) \in C(G-D)} \frac{\sigma_{B_1}}{\sigma_V} \leq \psi + \frac{K S^{1/p}}{\left(\min_v \sigma_v\right) S} = \psi + \frac{C}{S^{1-\frac{1}{p}}}
\end{equation}
\item All of the above results rely on ensuring that $\utheta_s, \otheta_s = \Theta(S^{1/p})$.
We demonstrate that there exists a simple update to dynamically choose rewards $r_s(t) = f(\utheta_s(t), \otheta_s(t), \sigma_{\partial s}(t), r_s(t-1))$
in response to rebalances to ensure that $\utheta_s \otheta_s = \Theta(S^{1/p})$.
This update rule gives lower rewards to node operators who increase overlap and higher rewards to those who decrease overlap.
We note that such incentive mechanisms have been used as so-called incentive `boosts' within various restaking point systems~\cite{llama-risk-rewards, eigenlayer-rewards}.
\end{enumerate}
Combined, these results ensure that one can ensure small cascade coefficient by dynamically updating rewards in the presence of strategically rebalancing node operators.

\subsection{Proofs of Steps 1 and 2}
The claim that $|B| = O(S^{1/p})$ for a $p$-attack is straightforward:
\begin{claim}[Step 1]\label{claim:localize-node operators}
Suppose that $\frac{\max_s \pi_s}{\min_v \sigma_v} \leq K$. If $(A, B)$ is $p$-attack, then $|B| \leq KS^{1/p}$  
\end{claim}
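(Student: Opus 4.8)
The plan is to chain the profitability condition~\eqref{eq:costly-profit} of a $p$-attack together with two elementary bounds, one for each side of that inequality. On the profit side, I would crudely upper-bound $f_p(\pi, A)$: since $|A| \le S$ and $\pi_s \le \max_{s'} \pi_{s'}$ for every $s \in A$, the definition~\eqref{eq:lp-profit} gives
\[
f_p(\pi, A) = \left(\sum_{s \in A} \pi_s^p\right)^{1/p} \le \left(|A| \cdot (\max_s \pi_s)^p\right)^{1/p} = |A|^{1/p} \max_s \pi_s \le S^{1/p} \max_s \pi_s .
\]
On the stake side, I would lower-bound the stake controlled by the attacking coalition by counting operators: $\sigma_B = \sum_{v \in B} \sigma_v \ge |B| \cdot \min_v \sigma_v$.

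Combining these two estimates with the profitability requirement $f_p(\pi, A) \ge \sigma_B$ from~\eqref{eq:costly-profit} yields $|B| \min_v \sigma_v \le S^{1/p} \max_s \pi_s$, and hence
\[
|B| \le S^{1/p} \cdot \frac{\max_s \pi_s}{\min_v \sigma_v} \le K S^{1/p},
\]
using the hypothesis $\frac{\max_s \pi_s}{\min_v \sigma_v} \le K$. This is precisely the claimed bound, so the argument is complete.

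There is essentially no obstacle in this step — it is, as the text says, straightforward — so the only thing worth emphasizing is \emph{where $p$ enters}. The bound is meaningful only because the $\ell_p$ norm is sublinear for $p > 1$: it is the factor $|A|^{1/p}$, rather than $|A|$, that makes $|B| = O(S^{1/p})$ nontrivial. The same computation applied to the linear profit function $f(\pi, A) = \sum_{s \in A} \pi_s$ of~\cite{naveen-placeholder} would only give $|B| \le K S$, which is vacuous. I would also remark that the feasibility condition~\eqref{eq:feasibility} is not used here at all — only profitability~\eqref{eq:costly-profit} — and that $\min_v \sigma_v$, which appears in $K$, is a protocol design parameter (a minimum stake requirement), so that $K$ can be driven down by the protocol designer to tighten the bound.
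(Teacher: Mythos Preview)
Your proof is correct and essentially identical to the paper's: both upper-bound $f_p(\pi,A)$ by $|A|^{1/p}\max_s\pi_s$, lower-bound $\sigma_B$ by $|B|\min_v\sigma_v$, and chain these through the profitability condition~\eqref{eq:costly-profit}. Your additional remarks on the role of $p>1$, the non-use of feasibility, and $\min_v\sigma_v$ as a design parameter are accurate and helpful context.
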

\begin{proof}
    For a $p$-attack $(A,B)$,~\eqref{eq:costly-profit} and~\eqref{eq:lp-profit} imply that $|A|^{1/p} \left(\max_{s\in A} \pi_s\right) \geq f_p(\pi, A) \geq \sigma_B \geq \left(\min_{v\in B} \sigma_v\right) |B|$.
    Therefore $|B| \leq \frac{\max_{s\in A}\pi_s}{\min_{v \in B}\sigma_v} |A|^{1/p} \leq K S^{1/p} $
\end{proof}
\noindent The second claim, \ie~$|A| = o(S)$, relies on upper and lower bounding the overlap between services.
\begin{claim}[Step 2]
    Suppose that there exists $\delta > 0$ such that for all $s, t \in S$, $|\partial s \cap \partial t| \geq (1+\delta)KS^{1/p}$ and $\otheta_s \leq \left(\max_v \sigma_v\right) K S^{1/p}$ for all $s \in S$.
    If $\frac{\delta \left(\min_v \sigma_v\right)}{\left(\max_v \sigma_v\right)} \geq 1-\frac{1}{e(S-1)^2}$, then for any  $p$-attack $(A, B)$, we have
    \begin{equation}\label{eq:A-bound}
    |A| \leq \left(\frac{2K(S-1)}{S-2}\right)^{\frac{p}{p-1}}
    \end{equation}
    where $K = \frac{\max_s \pi_s}{\min_v \sigma_v}$.
    Therefore if $K = o(S^{\frac{p-1}{p}})$, then $|A| = o(S)$
\end{claim}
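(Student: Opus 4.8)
The plan is to derive two bounds on the attacking stake $\sigma_B$ and play them against one another. The first, an upper bound, is immediate from Step~1 and the $\ell_p$ structure of $f_p$: the profitability condition~\eqref{eq:costly-profit} together with~\eqref{eq:lp-profit} gives $\sigma_B \le f_p(\pi,A) = \bigl(\sum_{s\in A}\pi_s^p\bigr)^{1/p} \le |A|^{1/p}\max_s\pi_s$, and Step~1 gives $|B|\le KS^{1/p}$, hence also $\sigma_B \le (\max_v\sigma_v)KS^{1/p}$. The second, which is the real work, is a lower bound on $\sigma_B$ that grows essentially \emph{linearly} in $|A|$; since the upper bound grows only like $|A|^{1/p}$, a linear lower bound is exactly what produces a bound of the shape $|A|^{1-1/p}\le \mathrm{const}$, i.e. $|A|\le \mathrm{const}^{p/(p-1)}$, with the H\"older conjugate exponent appearing for this reason.

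To produce the linear lower bound I would combine the feasibility condition~\eqref{eq:feasibility} with the two overlap hypotheses. First I upgrade the cardinality hypothesis to a stake statement: since every operator carries stake at least $\min_v\sigma_v$, the bound $|\partial s\cap\partial t|\ge(1+\delta)KS^{1/p}$ yields $\theta_{s,t}=\sigma_{\partial s\cap\partial t}\ge(\min_v\sigma_v)(1+\delta)KS^{1/p}$, hence $\sigma_{\partial s}\ge\utheta_s\ge(\min_v\sigma_v)(1+\delta)KS^{1/p}$ for every $s\in S$; feasibility then forces $\sigma_{B\cap\partial s}\ge\alpha_s\sigma_{\partial s}$ to be correspondingly large at every $s\in A$. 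Second, the upper bound $\otheta_s\le(\max_v\sigma_v)KS^{1/p}$, together with $\sigma_B\le(\max_v\sigma_v)KS^{1/p}$, controls how much of $B$'s stake can be reused between two services: for any $s,t\in A$,
\[
\sigma_{B\cap\partial s} + \sigma_{B\cap\partial t} = \sigma_{B\cap(\partial s\cup\partial t)} + \sigma_{B\cap\partial s\cap\partial t} \le \sigma_B + \otheta_s \le 2(\max_v\sigma_v)KS^{1/p},
\]
which is the origin of the factor $2$ in the claimed bound. Aggregating this pairwise inequality over all pairs in $A$ against the per-service feasibility demand then produces the desired lower bound on $\sigma_B$ as a (roughly) linear function of $|A|$.

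It remains to put the two bounds together: a linear lower bound $\sigma_B\gtrsim|A|\cdot c$ against $\sigma_B\le|A|^{1/p}\max_s\pi_s$ gives $|A|^{1-1/p}\le\max_s\pi_s/c$, and substituting the constant $c$ (which carries the $(\max_v\sigma_v)KS^{1/p}$ and the $S-1$, $S-2$ bookkeeping from the aggregation) and using $K\min_v\sigma_v=\max_s\pi_s$ to cancel the common $KS^{1/p}$, one arrives at $|A|^{1-1/p}\le\frac{2K(S-1)}{S-2}$, i.e. the stated bound. The side condition $\frac{\delta(\min_v\sigma_v)}{\max_v\sigma_v}\ge 1-\frac{1}{e(S-1)^2}$ is precisely what is needed so that the stake lower bound $(\min_v\sigma_v)(1+\delta)KS^{1/p}$ on $\sigma_{\partial s}$ (hence on $\utheta_s$) essentially catches up to the stake upper bound $(\max_v\sigma_v)KS^{1/p}$ on $\otheta_s$ — it pins all overlaps into a multiplicative window of width $1+O\!\left(\frac{1}{e(S-1)^2}\right)$ — so that unioning $S-1$ (resp. $\binom{S-1}{2}$) overlapping pieces in the aggregation loses only a constant factor; the $e$ and the $(S-1)^2$ enter through a $\prod_i(1-x_i)\ge 1-\sum_i x_i$-type estimate on those pieces. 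The concluding clause is then immediate: $\bigl(\frac{2K(S-1)}{S-2}\bigr)^{p/(p-1)}=\Theta\bigl(K^{p/(p-1)}\bigr)$ since $\frac{S-1}{S-2}\to 1$, so $K=o(S^{(p-1)/p})$ forces $|A|=o(S)$.

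The main obstacle is the aggregation step: turning the clean pairwise inequality into a bound on $|A|$ without incurring the usual quadratic Bonferroni loss when the sets $\partial s$, $s\in A$, are unioned. This is exactly where the full strength of the hypotheses must be spent — the near-equality of $\utheta_s$ and $\otheta_s$ forced by the side condition, together with the cardinality-versus-stake gap between the two overlap bounds — and it is where the precise constants $2$, $S-1$, $S-2$, and $e$ in the statement get pinned down. Everything else is routine manipulation of the attack conditions and of H\"older's inequality.
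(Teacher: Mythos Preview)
Your high-level plan --- sandwich $\sigma_B$ between an $O(|A|^{1/p})$ upper bound from profitability and an $\Omega(|A|)$ lower bound coming from the overlap hypotheses, then solve for $|A|$ --- is exactly the paper's strategy. The gap is in how you obtain the linear lower bound.

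Your pairwise identity $\sigma_{V_s}+\sigma_{V_t}=\sigma_{V_s\cup V_t}+\sigma_{V_s\cap V_t}\le\sigma_B+\otheta_s$ (with $V_s=B\cap\partial s$) is correct, but aggregating it over all $\binom{|A|}{2}$ pairs yields $(|A|-1)\sum_{s\in A}\sigma_{V_s}\le\binom{|A|}{2}(\sigma_B+\otheta)$, i.e.\ $\sum_s\sigma_{V_s}\le\tfrac{|A|}{2}(\sigma_B+\otheta)$ --- an \emph{upper} bound on the sum, which after inserting any per-service lower bound on $\sigma_{V_s}$ collapses to a lower bound on $\sigma_B$ that is \emph{constant} in $|A|$, not linear. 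You flag this aggregation as ``the main obstacle'' but do not actually clear it; neither the feasibility condition nor a $\prod_i(1-x_i)\ge 1-\sum_i x_i$ estimate rescues the purely pairwise route.

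The paper's mechanism is different in the decisive place. It expands $\sigma_B=\sigma_{\bigcup_{s}V_s}$ by full inclusion--exclusion and proves the \emph{ratio} bound $\sigma_{V_s\cap V_t}\le\xi\,\sigma_{V_s}$ with $\xi=1-\tfrac{\delta\min_v\sigma_v}{\max_v\sigma_v}$, iterated to $\sigma_{V_{s_1}\cap\cdots\cap V_{s_k}}\le\xi^{k-1}\sigma_{V_{s_1}}$. The side condition on $\delta$ is precisely $\xi\le\frac{1}{e(S-1)^2}$, so the alternating tail, bounded via $\binom{S-1}{k}\le(e(S-1)/k)^k$, sums to $O(1/(S-1))$; that is where $e$ and $(S-1)^2$ actually enter. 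The cardinality hypothesis $|\partial s\cap\partial t|\ge(1+\delta)KS^{1/p}$ is used not to lower bound $\sigma_{\partial s}$ through feasibility, as you propose, but together with $|B|\le KS^{1/p}$ to force $|\partial s\cap\partial t\setminus B|\ge\delta KS^{1/p}$, which is what drives $\xi$ small. Indeed the paper never invokes~\eqref{eq:feasibility} at all --- it only needs $V_s\neq\emptyset$ so that $\sigma_{V_s}\ge\min_v\sigma_v$ --- so your emphasis on $\alpha_s\sigma_{\partial s}$ is a detour, and the factor $2$ in the statement comes from the final $\frac{S-1}{S-1-C}$ cleanup, not from your pairwise inequality.
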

This bound demonstrates the number of attacked services is sublinear in $S$ provided that the overlap is sufficiently small and that the adversary is submodular.
%Note that this bound does not hold for $p = 1$ and we provide an explicit example in Appendix XXX of $|A| = \Omega(S)$ for $p=1$. %\TC{Mallesh to add example with the chain of ($1-\epsilon$, $\epsilon$) attacks}
Furthermore, note that the assumption $|\partial s \cap \partial t| \geq (1+\delta)KS^{1/p}$ implies that $\utheta_s = \Omega(S^{1/p})$ since $\utheta_s \geq \left(\min_v \sigma_v\right)|\partial s \cap \partial t|$.
We prove this claim in Appendix~\ref{app:claim2} but briefly sketch it here for completeness.
The main ingredient of the proof is expanding $\sigma_B$ into a sum of terms depending on $B\cap \partial s$ for different services $s$.
We then lower bound this expansion and show $\sigma_B = \Omega(|A|)$.
On the other hand, the profitability condition implies that $\sigma_B \leq f_p(\pi, A) = O(|A|^{1/p})$.
Combining these inequalities yields the claim.
We note that if one adds extra constraints (\ie~$G$ is $d$-regular), then one can get achieve stronger bounds.

\subsection{Proof of Step 3}
The third step demonstrates that with sufficient incentives, one is able to rebalance enough stake to ensure that the $p$-attack $(A_2, B_2)$ is not viable.
Recall that a costly attack needs to be both profitable~\eqref{eq:costly-profit} and feasible~\eqref{eq:feasibility}.
When a set of node operators rebalances, they do not change the profitability of the $p$-attack $(A_2, B_2)$, but rather, make the setup infeasible.
Suppose $D_s \subset V$ is the set of node operators who rebalance into service $s$ after $p$-attack $(A_1, B_1)$.
Then the rebalance has successfully thwarted a $p$-attack $(A_2, B_2)$ if the following conditions hold:
\begin{align}
\text{[Attack 1 is profitable]} && f_p(\pi, A_1) &\geq \sigma_{B_1} \\
\text{[Attack 1 is feasible]} && \forall s \in A_1 \;\;\; \sigma_{\partial s \cap B_1} &\geq \alpha_s \sigma_{\partial s}\\
\text{[Attack 2 is profitable]} && f_p(\pi, A_2) & \geq \sigma_{B_2} \label{eq:attack2-profitable}\\
\text{[Attack 2 is feasible without rebalancing]} && \forall s \in A_2 \;\;\; \sigma_{(\partial s - B_1)\cap B_2} &\geq \alpha_s \sigma_{\partial s - B_1} \label{eq:feasible-no-rebalance}\\ 
\text{[Attack 2 is infeasible with rebalancing]} && \forall s \in A_2 \;\;\; \sigma_{(\partial s - B_1 + D_s) \cap B_2} &\leq \alpha_s \sigma_{\partial s - B_1 + D_s} \label{eq:infeasible-rebalance}
\end{align}
Our claim demonstrates that~\eqref{eq:infeasible-rebalance} holds under conditions on the rewards and overlap.
We first prove a simple lemma that will simplify the proof.
\begin{lemma}\label{lemma:d-lower-bound}
Suppose that $\frac{r_s}{c_s} \geq KS^{1/p} + 1$. Then there exists $\kappa > 0$ such that
\[
\sigma_{D_s(B_1)} \geq \kappa \sigma_{\partial s - B_1}
\]
\end{lemma}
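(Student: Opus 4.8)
The quantity $\sigma_{D_s(B_1)}$ is the total stake that strategically rebalancing operators add to service $s$ once attack~1 has removed the coalition $B_1$, so that $s$ retains only $\sigma_{\partial s - B_1} = \sigma_{\partial s} - \sigma_{\partial s \cap B_1}$. The plan is to combine three ingredients: the pre-attack equilibrium condition, under which every edge $(v,s)\in E$ is profitable with impact in the sense of~\eqref{eq:profitable-impact}; the fact that deleting $B_1$ only \emph{lowers} the competition at $s$, so the post-attack profitability threshold at $s$ is no larger than the pre-attack one; and the hypothesis $r_s/c_s \geq KS^{1/p}+1$, which makes that threshold small. Since the $\mathsf{Update}$ dynamics drive the configuration to a new equilibrium, the content of the lemma is that the stake that must flow back into $s$ to restore equilibrium is a fixed fraction of the residual stake $\sigma_{\partial s-B_1}$.

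\textbf{Key steps.} First I would record the post-attack threshold: an operator $v\notin\partial s-B_1$ joining $s$ alone is profitable with impact iff $\sigma_v \geq \frac{c_s\,\sigma_{\partial s-B_1}}{r_s-c_s}$, and since $r_s/c_s\geq KS^{1/p}+1$ this threshold is at most $\frac{\sigma_{\partial s-B_1}}{KS^{1/p}}$; moreover every $v\in\partial s\cap B_1$ satisfied $\sigma_v\geq \frac{c_s\sigma_{\partial s}}{r_s-c_s}\geq\frac{c_s\sigma_{\partial s-B_1}}{r_s-c_s}$ before the attack, so operators of the same size clear the post-attack bar. Second, because rebalancing proceeds while joining $s$ is strictly profitable and halts only at the new equilibrium, the cohort $D_s(B_1)$ recoups the removed stake up to the granularity of one operator, $\sigma_{D_s(B_1)}\geq \sigma_{\partial s\cap B_1}-\max_v\sigma_v$; this is exactly the replacement property highlighted in the caption of Figure~\ref{fig:rebalance}. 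Third I would relate $\sigma_{\partial s\cap B_1}$ to $\sigma_{\partial s-B_1}$: whenever the lemma is actually invoked, $s$ is a service at which attack~2 became feasible only after attack~1, so $\partial s\cap B_1\neq\emptyset$, and in the principal case $s\in A_1$ the feasibility condition~\eqref{eq:feasibility} gives $\sigma_{\partial s\cap B_1}\geq\alpha_s\sigma_{\partial s}=\alpha_s(\sigma_{\partial s-B_1}+\sigma_{\partial s\cap B_1})$, hence $\sigma_{\partial s\cap B_1}\geq\frac{\alpha_s}{1-\alpha_s}\sigma_{\partial s-B_1}$; chaining the three steps yields $\sigma_{D_s(B_1)}\geq\kappa\,\sigma_{\partial s-B_1}$ with $\kappa$ depending only on $\min_s\alpha_s$, the ratio $\min_v\sigma_v/\max_v\sigma_v$, and the slack in $r_s/c_s$, but not on $S$. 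For the residual case $s\notin A_1$, where the stake at $s$ changed only because some of its operators were validating $A_1$, I would instead use the equilibrium lower bound $\sigma_{\partial s-B_1}+\sigma_{D_s(B_1)}\geq \frac{r_s-c_s}{c_s}\min_v\sigma_v \geq KS^{1/p}\min_v\sigma_v = S^{1/p}\max_s\pi_s$: if $\sigma_{\partial s-B_1}$ is below half this quantity the claim is immediate, and otherwise a constant fraction of $s$'s stake was removed and one is back in the previous case.

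\textbf{Main obstacle.} The delicate point is the \emph{supply} side of the second step: one must rule out the $\mathsf{Update}$ dynamics stalling because every operator is already saturated across other services, so that no stake is available to flow into $s$. This is where the standing assumptions $V>S$ and the uniform floor $\min_v\sigma_v$ (already used in Step~1) are needed, to certify that spare stake always exists. Turning this into a clean bound, so that $\kappa$ does not degrade as $S\to\infty$, requires controlling how far the post-rebalancing stake at $s$ can over- or under-shoot the pre-attack level; this is precisely what the $\max_v\sigma_v$ granularity term and the case split on $\sigma_{\partial s-B_1}$ are there to absorb. A secondary subtlety, worth stating before the proof, is that~\eqref{eq:infeasible-rebalance} only has to hold at services where attack~2's feasibility was created by attack~1, and at every such service $\partial s\cap B_1\neq\emptyset$, which is what lets the third step go through.
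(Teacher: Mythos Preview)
Your plan misses the paper's actual argument, which is a three-line algebraic manipulation that uses none of the machinery you propose. The paper first \emph{defines} $\mathcal{D}_s(B_1,r_s,c_s)$ as those sets $D\subset V\setminus B_1$ whose members are profitable with impact after $B_1$ is removed and $D$ is added, but were not profitable before; $D_s(B_1)$ is then the $\sigma$-maximal such set. The proof simply sums the post-attack profitability inequality $\sigma_v \geq \frac{c_s\,\sigma_{\partial s - B_1 + D_s(B_1)}}{r_s-c_s}$ over $v\in D_s(B_1)$, expands $\sigma_{\partial s - B_1 + D_s(B_1)} = \sigma_{\partial s - B_1} + \sigma_{D_s(B_1)}$, and rearranges to obtain
\[
\Bigl(1 - \tfrac{|D_s(B_1)|\,c_s}{r_s-c_s}\Bigr)\,\sigma_{D_s(B_1)} \;\geq\; \tfrac{|D_s(B_1)|\,c_s}{r_s-c_s}\,\sigma_{\partial s - B_1}.
\]
The hypothesis $r_s/c_s \geq KS^{1/p}+1$ together with $|D_s(B_1)|\leq KS^{1/p}$ makes the left-hand coefficient positive, yielding $\kappa = \frac{|D_s(B_1)|\,c_s}{\,r_s-c_s - |D_s(B_1)|\,c_s\,}>0$. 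No equilibrium dynamics, no feasibility of attack~1, no case split.

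Your route has two concrete problems. First, the ``recoup'' inequality $\sigma_{D_s(B_1)}\geq \sigma_{\partial s\cap B_1}-\max_v\sigma_v$ in your Step~2 is asserted from an informal picture of sequential rebalancing, but it is not a consequence of the paper's definition of $D_s(B_1)$ as an argmax over a constraint set; you would need to prove it, and doing so essentially requires the algebraic bound above anyway. Second, your Step~3 treats $s\in A_1$ as the principal case and invokes the feasibility condition~\eqref{eq:feasibility} for attack~1 at $s$, but the lemma statement involves no $A_1$ at all, and in the paper's application (Claim~3) one has $s\in A_2$, hence $s\notin A_1$ by disjointness---so your ``principal'' case never arises and the entire argument routes through what you call the residual case. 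Your supply-side worry about $D_s(B_1)$ being empty is legitimate, but the paper's $\kappa$ is explicitly proportional to $|D_s(B_1)|$ and the lemma only claims existence of \emph{some} $\kappa>0$; the paper is tacitly assuming $|D_s(B_1)|\geq 1$ rather than proving it.
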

\begin{proof}
  We first define the sets $D_s(B_1)$ as the set of node operators who are not profitable at $s$ prior to $B_1$ being attacked and are profitable afterwards.
    Formally, define
    \begin{align*}
    \mathcal{D}_s(B_1, r_s, c_s) = \left\{ D \subset V - B_1 : \forall v\in D,\;\sigma_v \geq \frac{c_s\sigma_{\partial s - B_1 + D}}{r_s-c_s}, \;
    \sigma_v \leq \frac{c_s\sigma_{\partial s}}{r_s-c_s}\right\}
    \end{align*}
    where we use eq.~\eqref{eq:profitable-impact} to define profitability conditions.
    Let $D_s(B_1) \in \argmax_{D \in \mathcal{D}_s(B_1, r_s, c_s)} \sigma_D$.
    Next, we bound the amount of stake in $D_s(B_1)$ by $\sigma_{\partial s}$ and the attack $\sigma_{B_1}$:
    \[
    \sigma_{D_s(B_1)} = \sum_{v \in D_s(B_1)} \sigma_v \geq \sum_{v \in D_s(B_1)} \frac{c_s \sigma_{\partial s - B_1 + D_s(B_1)}}{r_s - c_s} = |D_s(B_1)| \frac{c_s \sigma_{\partial s - B_1 + D_s(B_1)}}{r_s - c_s} 
    \]
    This implies that $\left(1 - |D_s(B_1)| \frac{c_s}{r_s - c_s}\right) \sigma_{D_s(B_1)} \geq \frac{|D_s(B_1)| c_s}{r_s - c_s} \sigma_{\partial s - B_1}$.
    Therefore when $|D_s(B_1)| \frac{c_s}{r_s-c_s} < 1$, this is a non-trivial bound: $\sigma_{D_s(B_1)} \geq \kappa \sigma_{\partial s - B_1}$.
    As $|D_s(B_1)| \leq KS^{1/p}$, this condition holds when $\frac{r_s}{c_s} \geq KS^{1/p} + 1$.
\end{proof}

\begin{claim}[Step 3]
    Suppose that $\frac{r_s}{c_s} \geq KS^{1/p} + 1$, $|A_t| = o(S)$ and for all $s \in S$ $\sigma_{\partial s} \geq (1+\left(\min_v \sigma_v\right)) K\left(\max_v \sigma_v\right)S^{1/p}$.
    Moreover, suppose that $\left(\min_s \alpha_s\right) \left(\max_v \sigma_v\right) \geq 2$.
    Then~\eqref{eq:infeasible-rebalance} holds.
\end{claim}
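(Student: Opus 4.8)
The goal is to verify~\eqref{eq:infeasible-rebalance}: for every $s \in A_2$, $\sigma_{(\partial s - B_1 + D_s)\cap B_2} \le \alpha_s\,\sigma_{\partial s - B_1 + D_s}$. The plan is to show that the left-hand side is only of order $S^{1/p}$ with a vanishing leading constant, while the rebalanced denominator on the right is of order $S^{1/p}$ with a constant bounded below, so the inequality holds for large $S$. The first observation is that the rebalancers $D_s \subset V - B_1$ are honest strategic agents who join $s$ solely to collect $\rho_{sv}$ (they must remain profitable with impact in the sense of~\eqref{eq:profitable-impact}), so by the rebalancing model they are disjoint from the attacking coalition $B_2$. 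Hence $(\partial s - B_1 + D_s)\cap B_2 = (\partial s - B_1)\cap B_2$, and the left-hand side equals $\sigma_{(\partial s - B_1)\cap B_2} \le \sigma_{B_2}$.

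Next I would upper bound $\sigma_{B_2}$. Since $(A_2,B_2)$ is a valid $p$-attack on $G^p_0 \searrow B_1$, the computation in the Step~1 claim (which in fact produces $|B| \le K|A|^{1/p}$ for any $p$-attack $(A,B)$, using the profitability condition~\eqref{eq:costly-profit} together with~\eqref{eq:lp-profit}) gives $|B_2| \le K|A_2|^{1/p}$, so $\sigma_{B_2} \le (\max_v \sigma_v)\,K\,|A_2|^{1/p}$. By the hypothesis $|A_2| = o(S)$ this is $o\big((\max_v \sigma_v)KS^{1/p}\big)$.

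Then I would lower bound the rebalanced stake $\sigma_{\partial s - B_1 + D_s} = \sigma_{\partial s} - \sigma_{\partial s \cap B_1} + \sigma_{D_s}$. The Step~1 bound gives $|B_1| \le KS^{1/p}$, hence $\sigma_{\partial s\cap B_1} \le \sigma_{B_1} \le (\max_v\sigma_v)KS^{1/p}$; combining with the hypothesis $\sigma_{\partial s} \ge (1+\min_v\sigma_v)K(\max_v\sigma_v)S^{1/p}$ yields $\sigma_{\partial s - B_1} \ge (\min_v\sigma_v)K(\max_v\sigma_v)S^{1/p}$. Because $\frac{r_s}{c_s}\ge KS^{1/p}+1$, Lemma~\ref{lemma:d-lower-bound} applies and gives $\sigma_{D_s} \ge \kappa\,\sigma_{\partial s - B_1}$ for some $\kappa > 0$, so $\sigma_{\partial s - B_1 + D_s} \ge (1+\kappa)(\min_v\sigma_v)K(\max_v\sigma_v)S^{1/p}$.

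Finally I would combine the two estimates: $\alpha_s\,\sigma_{\partial s - B_1 + D_s} \ge (\min_s\alpha_s)(1+\kappa)(\min_v\sigma_v)K(\max_v\sigma_v)S^{1/p}$, and compare this with $\sigma_{(\partial s - B_1)\cap B_2} \le (\max_v\sigma_v)K|A_2|^{1/p} = o\big((\max_v\sigma_v)KS^{1/p}\big)$. Since $(\min_s\alpha_s)(\max_v\sigma_v)\ge 2$, the denominator term strictly dominates the numerator term once $S$ is large enough that $|A_2|^{1/p}$ is sufficiently smaller than $S^{1/p}$, which is exactly what $|A_2| = o(S)$ provides; this establishes~\eqref{eq:infeasible-rebalance}. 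I expect the main obstacles to be (i) making the disjointness $D_s \cap B_2 = \emptyset$ fully rigorous from the equilibrium/rebalancing dynamics rather than taking it as a modeling primitive, and (ii) pinning down the final comparison of constants, i.e.\ matching the hypothesis $(\min_s\alpha_s)(\max_v\sigma_v)\ge 2$ against the ratio $\min_v\sigma_v/\max_v\sigma_v$ appearing in the denominator bound, which may require invoking $\min_v\sigma_v$ as a design parameter or extracting a bit more from the constant $\kappa$ in Lemma~\ref{lemma:d-lower-bound}.
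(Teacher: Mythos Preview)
Your overall architecture---upper bound the left-hand side of~\eqref{eq:infeasible-rebalance} via the profitability of $(A_2,B_2)$, lower bound the right-hand side via Lemma~\ref{lemma:d-lower-bound} together with the assumed floor on $\sigma_{\partial s}$, then compare orders in $S^{1/p}$---is exactly the paper's route. The lower bound on $\sigma_{\partial s - B_1 + D_s}$ is identical to the paper's, line for line.

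The one substantive divergence is your handling of $D_s\cap B_2$. You posit that rebalancers are disjoint from the next attacking coalition and flag this as obstacle~(i). The paper simply does not make this assumption: it keeps the decomposition $\sigma_{(\partial s - B_1 + D_s)\cap B_2}=\sigma_{(\partial s - B_1)\cap B_2}+\sigma_{D_s\cap B_2}$ and bounds \emph{each} summand by $\sigma_{B_2}$, yielding $\sigma_{(\partial s - B_1 + D_s)\cap B_2}\le 2\sigma_{B_2}$. That factor of~$2$ is precisely what the hypothesis $(\min_s\alpha_s)(\max_v\sigma_v)\ge 2$ is there to absorb---this resolves your obstacle~(ii) as well, since the $2$ on the left is matched by the $2$ coming from $\alpha_s(\max_v\sigma_v)$ on the right. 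So no equilibrium argument about honest rebalancers is required.

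A minor point: you bound $\sigma_{B_2}$ by first invoking $|B_2|\le K|A_2|^{1/p}$ and then multiplying by $\max_v\sigma_v$, which gives $(\max_v\sigma_v)K|A_2|^{1/p}$. The paper uses the profitability condition~\eqref{eq:costly-profit} directly to get the sharper $\sigma_{B_2}\le f_p(\pi,A_2)\le(\max_s\pi_s)|A_2|^{1/p}$; since $(\min_v\sigma_v)K=\max_s\pi_s$, this is exactly the constant that makes the final comparison go through cleanly without any extraneous $\min_v\sigma_v/\max_v\sigma_v$ ratio.
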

\begin{proof}
  From the feasibility of $(A_2, B_2)$ without rebalancing, eq.~\eqref{eq:feasible-no-rebalance}, we have $\alpha_s \sigma_{\partial s - B_1} \leq \sigma_{(\partial s - B_1) \cap B_2} \leq \sigma_{B_2}$.
  Since $|A| = o(S)$, the profitability of attack 2 (eq.~\eqref{eq:attack2-profitable}) implies
  \[
  \sigma_{B_2} \leq f_p(\pi, A_2) \leq \left(\max_s \pi_s\right) |A_2|^{1/p} \leq \left(\max_s \pi_s\right) S^{o(1/p)}
  \]
  This implies that $\sigma_{(\partial s - B_1)\cap B_2} = O(S^{o(1/p)})$.
  Similarly, we have $\sigma_{D_s(B_1)\cap B_2} \leq \sigma_{B_2} = O(S^{o(1/p)})$.
  From the lemma and by our assumption on $r_s$, we have $\sigma_{D_s(B_1)} \geq \kappa \sigma_{\partial s - B_1}$.
  Moreover, note that
  \[
  \sigma_{\partial s - B_1} \geq \sigma_{\partial s} - |B_1| \left(\max_v \sigma_v\right) \geq \left(\min_v \sigma_v\right) K \left(\max_v \sigma_v\right) S^{1/p} = \left(\max_s \pi_s\right)\left(\max_v \sigma_v\right) S^{1/p} 
  \]
  This implies that $\sigma_{\partial s - B_1 + D_s} \geq (1+\kappa)\sigma_{\partial s - B_1} \geq (1+\kappa)  \left(\max_s \pi_s\right) \left(\max_v \sigma_v\right) S^{1/p}$.
  Combined, we have
  \begin{align*}
  \sigma_{(\partial s - B_1 + D_s)\cap B_2} &= \sigma_{(\partial s - B_1)\cap B_2} + \sigma_{D_s \cap B_2} \leq 2 \left(\max_s \pi_s\right) S^{o(1/p)} \\
  &\leq (1+\kappa) \alpha_s \left(\max_s \pi_s\right) \left(\max_v \sigma_v\right) S^{1/p} \leq \alpha_s \sigma_{\partial s - B_1 + D_s}
  \end{align*}
  which proves~\eqref{eq:infeasible-rebalance}.
\end{proof}
We note that our assumptions are mild: the condition on $\sigma_{\partial s}$ simply says the initial stake at each service needs to be above a $p$ dependent threshold while the condition on $\min_s \alpha_s$ effectively says we need services that don't have low attack percentages relative to stake.

\subsection{Proof of Step 4}
As $\pi, \sigma$ are fixed, rebalancing can only change the feasiblity of an attack~\eqref{eq:feasibility}.
It is possible that a rebalance to thwart an attack $(A_2, B_2)$ makes a profitable yet infeasible $(\tilde{A}, \tilde{B}) \subset S \times V$ profitable and feasible.
If our rebalance set for service $s$ is $D_s \subset V$, this occurs when the following conditions hold:
\begin{align}
\text{[$(\tilde{A}, \tilde{B})$ is infeasible without rebalancing]} && \forall s \in \tilde{A} \;\;\; \sigma_{(\partial s - B_1)\cap \tilde{B}} &\leq \alpha_s \sigma_{\partial s - B_1} \label{eq:infeasible-no-rebalance}\\ 
\text{[$(\tilde{A}, \tilde{B})$  is feasible with rebalancing]} && \forall s \in \tilde{A} \;\;\; \sigma_{(\partial s - B_1 + D_s) \cap \tilde{B}} &\geq \alpha_s \sigma_{\partial s - B_1 + D_s} \label{eq:feasible-rebalance}
\end{align}
We demonstrate a simple sufficient condition depending on the rewards that ensures that there exists a rebalance where this does not occur.

\begin{claim}
    Suppose $\frac{r_s}{c_s} \geq 2\frac{\sigma_{\max}}{\sigma_{\min}}KS^{1/p} + 1$, then for all $D_s \in \mathcal{D}_s(B_1, r_s, c_s)$~\eqref{eq:feasible-rebalance} does not hold
\end{claim}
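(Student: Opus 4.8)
Fix a pair $(\tilde{A},\tilde{B})\subseteq S\times V$ that is profitable but infeasible before the rebalance, so that \eqref{eq:costly-profit} holds for $(\tilde{A},\tilde{B})$ and \eqref{eq:infeasible-no-rebalance} holds, and fix for each $s$ an arbitrary $D_s\in\mathcal{D}_s(B_1,r_s,c_s)$. The goal is to negate \eqref{eq:feasible-rebalance}, i.e.\ to exhibit one $s\in\tilde{A}$ with $\sigma_{(\partial s - B_1 + D_s)\cap\tilde{B}} < \alpha_s\,\sigma_{\partial s - B_1 + D_s}$. First I would note that the hypothesis $\tfrac{r_s}{c_s}\ge 2\tfrac{\sigma_{\max}}{\sigma_{\min}}KS^{1/p}+1$ implies the weaker bound $\tfrac{r_s}{c_s}\ge KS^{1/p}+1$, so Lemma~\ref{lemma:d-lower-bound} and the hypotheses of the Step~3 claim are in force; in particular the rebalance sets are non-empty and, by the definition of $\mathcal{D}_s(B_1,r_s,c_s)$, every $v\in D_s$ carries stake $\sigma_v\le \tfrac{c_s\sigma_{\partial s}}{r_s-c_s}\le \tfrac{\sigma_{\min}\,\sigma_{\partial s}}{2\sigma_{\max}\,KS^{1/p}}$, so rebalancers individually contribute only a tiny share of $\sigma_{\partial s}$.

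The core estimate bounds, for every $s\in\tilde{A}$, the stake $\tilde{B}$ can muster at $s$ after rebalancing. Since $(\partial s - B_1 + D_s)\cap\tilde{B}\subseteq\tilde{B}$, the profitability condition \eqref{eq:costly-profit} for $(\tilde{A},\tilde{B})$ together with \eqref{eq:lp-profit} and $|\tilde{A}|\le S$ gives, exactly as in the Step~1 claim, $\sigma_{(\partial s - B_1 + D_s)\cap\tilde{B}}\le \sigma_{\tilde{B}}\le f_p(\pi,\tilde{A})\le S^{1/p}\max_s\pi_s = K\sigma_{\min}S^{1/p}$ (the stake-controlled bound on rebalancers from the previous paragraph gives the same conclusion through the refined split $\sigma_{(\partial s-B_1)\cap\tilde B}+\sigma_{D_s\cap\tilde B}$, which is useful when the infeasibility of $(\tilde A,\tilde B)$ is nearly tight). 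For the other side, the Step~1 bound $|B_1|\le KS^{1/p}$ gives $\sigma_{B_1}\le K\sigma_{\max}S^{1/p}$, so with the Step~3 lower bound $\sigma_{\partial s}\ge (1+\sigma_{\min})K\sigma_{\max}S^{1/p}$ one gets $\sigma_{\partial s - B_1 + D_s}\ge\sigma_{\partial s}-\sigma_{B_1}\ge \sigma_{\min}K\sigma_{\max}S^{1/p}$, hence $\alpha_s\,\sigma_{\partial s - B_1 + D_s}\ge (\alpha_s\sigma_{\max})\,\sigma_{\min}KS^{1/p}\ge 2\sigma_{\min}KS^{1/p}$ using $\alpha_s\sigma_{\max}\ge 2$. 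Comparing the two bounds yields the strict inequality $\sigma_{(\partial s - B_1 + D_s)\cap\tilde{B}} < \alpha_s\,\sigma_{\partial s - B_1 + D_s}$ for this $s$ (in fact for every $s\in\tilde{A}$), which contradicts \eqref{eq:feasible-rebalance}.

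The part that I expect to require the most care is bookkeeping rather than a genuine analytic difficulty. One must justify that the rebalance set $D_s$ never overlaps the surviving neighborhood $\partial s - B_1$ (this is immediate from the two–sided constraint $\tfrac{c_s\sigma_{\partial s-B_1+D}}{r_s-c_s}\le\sigma_v\le\tfrac{c_s\sigma_{\partial s}}{r_s-c_s}$ in the definition of $\mathcal{D}_s$, but it underlies the additive split $\sigma_{(\partial s-B_1)\cap\tilde B}+\sigma_{D_s\cap\tilde B}$), and one must pin down precisely which ambient Step~3 hypotheses ($\sigma_{\partial s}$ above its $p$–dependent threshold and $\alpha_s\sigma_{\max}\ge 2$) are invoked versus what the new reward bound supplies, so that the argument is not circular. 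The reward hypothesis itself plays a dual role: via Lemma~\ref{lemma:d-lower-bound} it guarantees the rebalance is substantial (ruling out a degenerate $D_s$ so small that it could tip a near–tight infeasible set), and the extra factor $\tfrac{\sigma_{\max}}{\sigma_{\min}}$ over the Lemma's threshold is exactly what is needed to absorb the at most $|\tilde{B}|\le KS^{1/p}$ rebalancers of individual stake up to $\sigma_{\max}$ while keeping their aggregate contribution to $\partial s$ below the $\alpha_s$–fraction of $\sigma_{\partial s}$ that feasibility of $(\tilde{A},\tilde{B})$ would demand.
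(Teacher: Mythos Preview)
Your argument is correct, but it takes a different route from the paper and in doing so leaves the claim's actual hypothesis idle. Your core estimate bounds the left side of \eqref{eq:feasible-rebalance} by $\sigma_{\tilde B}\le K\sigma_{\min}S^{1/p}$ via profitability, and the right side from below by $\alpha_s(\sigma_{\partial s}-\sigma_{B_1})\ge (\alpha_s\sigma_{\max})\,K\sigma_{\min}S^{1/p}\ge 2K\sigma_{\min}S^{1/p}$, using the Step~3 standing assumptions $\sigma_{\partial s}\ge(1+\sigma_{\min})K\sigma_{\max}S^{1/p}$ and $\alpha_s\sigma_{\max}\ge2$. Notice that $D_s$ and the reward bound $\tfrac{r_s}{c_s}\ge 2\tfrac{\sigma_{\max}}{\sigma_{\min}}KS^{1/p}+1$ play no role in these inequalities: your argument would go through verbatim for \emph{any} $D_s$ (even $D_s=\emptyset$), so the ``dual role'' you describe for the reward hypothesis is not actually exercised.

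The paper's proof instead makes the claim self-contained under its stated hypothesis. It rewrites \eqref{eq:feasible-rebalance} as $\sigma_{\tilde B\cap D_s}\ge(\alpha_s\sigma_{\partial s-B_1}-\sigma_{(\partial s-B_1)\cap\tilde B})+\alpha_s\sigma_{D_s}$ and then \emph{inverts} the upper constraint in $\mathcal{D}_s(B_1,r_s,c_s)$: since any $v\in D_s$ has $\sigma_v\le\tfrac{c_s\sigma_{\partial s}}{r_s-c_s}$ and $\sigma_v\ge\sigma_{\min}$, the reward hypothesis gives $\sigma_{\partial s}\ge\tfrac{r_s-c_s}{c_s}\sigma_v\ge 2\sigma_{\max}KS^{1/p}$ directly, without borrowing Step~3's lower bound on $\sigma_{\partial s}$. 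You noted this membership bound in your first paragraph but used it only one way (rebalancers are individually small); the paper reads it the other way (hence $\sigma_{\partial s}$ must be large). That is precisely where the extra factor $2\tfrac{\sigma_{\max}}{\sigma_{\min}}$ over Lemma~\ref{lemma:d-lower-bound}'s threshold earns its keep, and it explains why the claim is stated with that particular constant.
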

\begin{proof}
    Rewrite~\eqref{eq:feasible-rebalance} as 
    $\sigma_{\tilde{B} \cap D_s} \geq (\alpha_s \sigma_{\partial s - B_1} - \sigma_{(\partial s - B_1)\cap \tilde{B}}) + \alpha_s \sigma_{D_s}$.
    First note that
    $\alpha_s \sigma_{\partial s - B_1} - \sigma_{(\partial s - B_1) \cap \tilde{B}} \geq \alpha_s \sigma_{\partial s - B_1} - \sigma_{\tilde{B}} \geq \alpha_s \sigma_{\partial s - B_1} - \sigma_{\max} K S^{1/p}$.
    Any $D_s \in \mathcal{D}_s(B_1, r_s, c_s)$ has $\sigma_{\partial s} \geq \frac{r_s - c_s}{c_s} \sigma_v \geq 2\frac{\sigma_{\max}}{\sigma_{\min}}KS^{1/p} \sigma_v \geq 2\sigma_{\max} K S^{1/p}$ so $\alpha_s \sigma_{\partial s - B_1} - \sigma_{(\partial s - B_1) \cap \tilde{B}} \geq K\sigma_{\max}S^{1/p}$.
    Since any attacking coalition $\tilde{B}$ has $\sigma_{D_s \cap \tilde{B}} \leq \sigma_{\tilde{B}} \leq K\sigma_{\max} S^{1/p}$,~\eqref{eq:feasible-rebalance} does not hold.
\end{proof}
Therefore, with sufficient rewards, any attack that is infeasible prior to rebalancing will not be feasible after rebalancing.

\subsection{Proof of Steps 5 and 6}\label{sec:step5-6}
Provided that $\forall s \in S, \exists r_s > 0$ and $V \geq S$, it is clear that $\sigma_V \geq \left(\min_v \sigma_v \right) S$.
When combined with steps 1, 2, and 3, this implies equation~\eqref{eq:r-psi}.
What remains to be shown is that the assumptions that exist within step 2 --- namely that for all $s \in S, \utheta_s, \otheta_s = \Theta(S^{1/p})$ --- can also be incentivized via rewards $r_s$.

To do this, we consider operator-specific rewards $r_{sv}$ for $s \in S, v\in V$, which incentivize targeting a particular overlap range.
For any service $s$, let $\tau(s) = \argmax_{t \in S, t\neq s} \sigma_{\partial s \cap \partial t} \subset S$ and let $\partial \tau(s) = \{v \in V : \exists s\in \tau(s) \;\;v \in \partial s \}$.
For each service $s$, let $\delta_s > 0$ and define
\begin{equation}\label{eq:discriminating-rewards}
r_{sv}(r, \delta, \sigma) = (1-\delta_s \ones_{v \in \partial \tau(s)}) r_s
\end{equation}
This reward linearly decreases the reward received by $v$ if $v$ increases $\otheta_s$.
We make the following simple claim:
\begin{claim}[Step 5]\label{claim:otheta-ub}
    If $r_s = \Omega(S^{1/p})$ and $\delta_s \geq 1-\frac{S^{1/p}}{r_s}$, then $\otheta_s = O(S^{1/p})$
\end{claim}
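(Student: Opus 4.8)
The plan is to turn the reward discount \eqref{eq:discriminating-rewards} into a cardinality bound on the operators that realize the maximum overlap at $s$, and then read off $\otheta_s$ from that count. The operators that contribute to $\otheta_s$ are exactly those in $\partial s\cap\partial\tau(s)$, and by \eqref{eq:discriminating-rewards} each such operator is paid only the reduced pro-rata reward $\rho_{sv}=(1-\delta_s)\,r_s\,\frac{\sigma_v}{\sigma_{\partial s}}$. The hypothesis $\delta_s\ge 1-S^{1/p}/r_s$ is precisely what keeps the aggregate reduced reward small: $(1-\delta_s)r_s\le S^{1/p}$.

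First I would invoke the equilibrium (rationality) condition underlying the rebalancing model: at a rest point of the strategically rebalancing dynamics, every edge $(v,s)\in E$ is profitable — with impact, hence a fortiori without impact — so \eqref{eq:profitable-no-impact} gives $\rho_{sv}\ge c_s$ for every $v\in\partial s\cap\partial\tau(s)$. Summing over these operators and using that pro-rata weights over a subset of $\partial s$ sum to at most one,
\[
c_s\,\bigl|\partial s\cap\partial\tau(s)\bigr|\;\le\;\sum_{v\in\partial s\cap\partial\tau(s)}\rho_{sv}\;=\;(1-\delta_s)\,r_s\,\frac{\sigma_{\partial s\cap\partial\tau(s)}}{\sigma_{\partial s}}\;\le\;(1-\delta_s)r_s\;\le\;S^{1/p},
\]
so $\bigl|\partial s\cap\partial\tau(s)\bigr|\le S^{1/p}/c_s$.

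Second I would convert this count into an overlap bound. By definition $\otheta_s=\sigma_{\partial s\cap\partial t^\star}$ for some $t^\star\in\tau(s)$, and $\partial t^\star\subseteq\partial\tau(s)$, whence $\partial s\cap\partial t^\star\subseteq\partial s\cap\partial\tau(s)$ and
\[
\otheta_s\;=\;\sigma_{\partial s\cap\partial t^\star}\;\le\;\sigma_{\partial s\cap\partial\tau(s)}\;\le\;\bigl(\max_v\sigma_v\bigr)\,\bigl|\partial s\cap\partial\tau(s)\bigr|\;\le\;\frac{\max_v\sigma_v}{c_s}\,S^{1/p}\;=\;O\!\left(S^{1/p}\right),
\]
treating the primitives $\max_v\sigma_v$ and $c_s>0$ (bounded below by $\min_s c_s$) as constants. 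The other hypothesis, $r_s=\Omega(S^{1/p})$, is needed only so that a valid discount $\delta_s\in(0,1)$ with $\delta_s\ge 1-S^{1/p}/r_s$ exists — it is the reward level already required by the earlier steps — and plays no further role in this claim.

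The step I expect to need the most care is making the equilibrium hypothesis precise. The claim is implicitly evaluated at a configuration produced by the rebalancing dynamics, so I must state that in that configuration no operator validates a service at which it is unprofitable (this is exactly where $\rho_{sv}\ge c_s$ is used) and that $\tau(s)$ — hence the set receiving the discount — is computed at the same configuration. Once those two points are pinned down, the inclusion $\partial s\cap\partial t^\star\subseteq\partial s\cap\partial\tau(s)$ is immediate and the remainder is just the two displayed inequalities.
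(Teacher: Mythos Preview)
Your argument is correct and takes a genuinely different, more elementary route than the paper. The paper establishes the bound by staying inside the attack--rebalance machinery of Steps~3 and~4: it introduces an auxiliary $p$-attack $(A,B)$, forms the rebalance set $\mathcal{D}_s(B,\delta_s)$, invokes Lemma~\ref{lemma:d-lower-bound} to get $\sigma_D\ge\kappa\,\sigma_{\partial s-B}$, and then threads a chain of inequalities through $\sigma_{\partial s-B+D}$ to arrive at the explicit bound~\eqref{eq:otheta-bound}, from which $r_s(1-\delta_s)\le S^{1/p}$ finishes. You bypass all of this with a single static equilibrium observation: sum the per-operator profitability inequality $\rho_{sv}\ge c_s$ over $\partial s\cap\partial\tau(s)$, use that the pro-rata weights over a subset of $\partial s$ sum to at most one, and read off $\lvert\partial s\cap\partial\tau(s)\rvert\le S^{1/p}/c_s$ directly. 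What your approach buys is transparency and brevity; what the paper's approach buys is narrative continuity with the surrounding steps (the same objects $\mathcal{D}_s$, $\kappa$, and the post-attack stake $\sigma_{\partial s-B}$ reappear), at the cost of a longer and more delicate derivation. Your closing caveat is exactly right: the place to be careful is stating the equilibrium hypothesis --- the paper never isolates it as such, instead encoding the same rationality via the profitability constraints in the definition of $\mathcal{D}_s$ --- but both arguments ultimately rest on the same premise.
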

\begin{proof}
    For these rewards, a node operator $v$ is profitable if $r_{sv} \frac{\sigma_v}{\sigma_{\partial s}} - c_s > 0$
    which implies the profitability with impact condition $\sigma_v >  \frac{c_s (\sigma_{\partial s} + \sigma_v)}{r_{sv}}$.
    This simplifies to $\sigma_v > \frac{c_s \sigma_{\partial s}}{r_{sv} - c_s}$.
    Now consider a $p$-attack $(A, B)$ and define the set
    \begin{align*}
    \mathcal{D}_s(B, \delta_s) = \left\{ D \subset V - B : \forall v\in D,\;\sigma_v \geq \frac{c_s\sigma_{\partial s - B + D}}{r_{sv}(\delta_s)-c_s}, \;
    \sigma_v \leq \frac{c_s\sigma_{\partial s}}{r_{sv}(\delta_s)-c_s}\right\}
    \end{align*}
    Let $D \in \mathcal{D}_s(B, \delta_s)$.
    By Lemma~\ref{lemma:d-lower-bound}, we have $\sigma_D \geq \kappa \sigma_{\partial s - B}$.
    This implies that
    \[
    \sigma_{\partial S - B + D} = \sigma_{\partial S - B} + \sigma_D \geq (1+\kappa)\sigma_{\partial s - B} \geq (1+\kappa)\left(\sigma_{\partial s} - \sigma_{\max} S^{1/p}\right)
    \]
    By the definition of $\mathcal{D}_s(B, \delta_s)$, we have
    \begin{align*}
    \sigma_D = \sum_{v \in D} \sigma_v &\geq c_s\sigma_{\partial s - B +D} \sum_{v \in D} \frac{1}{r_{sv}(\delta_s) - c_s} = (c_s\sigma_{\partial s - B +D}) \left(\frac{|D \cap \partial \tau(s)|}{r_s(1-\delta_s) - c_s} + \frac{|D - \partial \tau(s)|}{r_s - c_s} \right) \\
    &\geq (c_s\sigma_{\partial s - B +D})\frac{|D \cap \partial \tau(s)| + C |D-\partial \tau(s)|}{r_s(1-\delta_s) - c_s} \geq \frac{(1+\kappa) c_s C|D|}{r_s(1-\delta_s) - c_s}\left(\sigma_{\partial s} - \sigma_{\max} S^{1/p}\right) \\
    &\geq \frac{(1+\kappa) c_s K}{r_s(1-\delta_s) - c_s}\left(\otheta_s - \sigma_{\max} S^{1/p}\right)
    \end{align*}
    Note that the first inequality follows from the first bound in $\mathcal{D}_s(B, \delta_s)$, the second from the definition of $r_{sv}$.
    The third, fourth, and fifth inequalities stem from the fact that 
    $C = \frac{r_s(1-\delta_s) - c_s}{r_s - c_s} = \Omega\left(\frac{S^{1/p}}{r_s}\right)$
    and that $C|D| = C(|D\cap \partial \tau(s)| + |D-\partial\tau(s)|) \geq K$.
    Rearranging this yields
    \begin{equation}\label{eq:otheta-bound}
    \otheta_s \leq \frac{1}{(1+\kappa) c_s K} (r_s(1-\delta_s) - c_s + \sigma_{\max} S^{1/p})
    \end{equation}
    Given \eqref{eq:otheta-bound}, the choice of $\delta_s \geq 1 - \frac{S^{1/p}}{r_s}$ implies that $\otheta_s = O(S^{1/p})$.
\end{proof}
This claim shows that for sufficiently large discount $\delta_s$, one can ensure the maximum overlap satisfies the conditions of the prior steps.
We note a similar technique can be used for ensuring lower bounds on overlap that provides boosts or extra incentives for increasing overlap between services that are too low.

\section{Reward Selection Algorithms}\label{sec:algorithms}
We have demonstrated that provided rewards are sufficiently high, one can bound $R_{\psi}(G)$.
However, in practice, there is the question of how to compute optimal rewards $r^{\star}_s$ given a restaking graph $G$.
Given that the profit functions are submodular, we demonstrate that utilizing Algorithm~\ref{alg}, one can approximate the optimal rewards up to a multiplicative factor $E(S, p)$.

\paragraph{Sequential Profit Function.}
Consider the function $\mathsf{Profit}_p : C(G) \rightarrow \reals_+$ that is defined as:
\begin{equation}\label{eq:profit-function}
\mathsf{Profit}_p((A_1, B_1), \ldots, (A_T, B_T)) = \sum_{t=1}^T f_p(\pi, A_i) - \sigma_{B_i}
\end{equation}
This function is strictly positive for attack sequences.
Our goal will be to take an bound $T \in \naturals_+$ on sequence length and find a sequence $(\hat{A}_1, \hat{B}_1), \ldots, (\hat{A}_k, \hat{B}_k)$, $k \leq T$ such that $\mathsf{Profit}_p((\hat{A}_1, \hat{B}_1), \ldots, (\hat{A}_k, \hat{B}_k)) \geq \alpha(p) \max_{\substack{\mathcal{S} \in C(G), |\mathcal{S}| \leq T}} \mathsf{Profit}_p(\mathcal{S})$.
Given this sequence, one can solve for rewards that ensure that attack $(\hat{A}_2, \hat{B}_2)$ is invalid in the presence of rebalancing.

\paragraph{Optimal Rewards.}
Let $(A^{\star}_1, B^{\star}_1), \ldots, (A^{\star}_T, B^{\star}_T) \in \argmax_{\mathcal{S} \in C(G)} \mathsf{Profit}(\mathcal{S})$, then we can compute the optimal rewards $r^{\star}_s$ to ensure that $R_{\psi}(G)$ by computing the minimum rewards such that~\eqref{eq:infeasible-rebalance} holds.
This implies that we need $\alpha_s \sigma_{\partial s - B^{\star}_1 - D(r_s)} \geq \sigma_{(\partial s - B^{\star}_1 + D_s) \cap B^{\star}_2}$ for all $s \in A^{\star}_2$.
This can be formulated as the following minimax program:
\begin{align}
f(r_s) &= \max_{D \in \mathcal{D}_s(B_1, r_s, c_s)} \sum_{s \in A^{\star}_2} \log(\alpha_s \sigma_{\partial s - B^{\star}_1 + D} - \sigma_{(\partial s - B^{\star}_1 + D) \cap B^{\star}_2}) \nonumber \\
r^{\star}_s &= \min\{ r_s \geq 0 : f(r_s) > 0\} \label{eq:r-star}
\end{align}
We consider a convex relaxation of this problem.
Note that $D \in \mathcal{D}_s(B_1, r_s, c_s)$ implies a $V$-size set of linear constraints on $D$ of the form $\sigma_v \geq c_s \sigma_{\partial s - B^{\star}_1 + D}{r_s-c_s} \iff  \sigma_D \leq (r_s-c_s)\sigma_v - \sigma_{\partial s - B^{\star}_1}$
since $\partial s, B^{\star}_1$ are fixed.
Let $\Delta^n = \{ x \in \reals_+^n : \sum_i x_i \leq 1\}$ be the simplex.
We define
\begin{equation}\label{eq:f-rs-k}
\hat{f}(r_s, k) = \max_{D \in \left((r_s-c_s)k - \sigma_{\partial s - B_1^*}\right) \Delta^n} \sum_{s \in A^{\star}_2} \log(\alpha_s \sigma_{\partial s - B^{\star}_1 + D} - \sigma_{(\partial s - B^{\star}_1 + D) \cap B^{\star}_2})
\end{equation}
Note that $\hat{f}$ is monotone increasing in $r_s$ and $k$ and that this is a logarithmic barrier problem over a convex set~\cite[Ch. 11]{boyd2004convex}.
Moreover, note that $\hat{f} \geq f$ since $\mathcal{D}_s(B_1^{\star}, r_s, c_s) \subset \left((r_s-c_s)(\max_{v}\sigma_v) - \sigma_{\partial s - B_1^*}\right) \Delta^n$. 
This implies that unless $f(r_s) < 0$ for all $r_s > 0$, then we can find $r_s$ by a combination of bisection and solving~\eqref{eq:f-rs-k} by an interior point method.

\paragraph{Approximation Guarantees.}
We prove the following claim in Appendix~\ref{app:algo}:
\begin{claim}
    Suppose~\eqref{eq:r-star} is feasible for some $r_s > 0$ and that $\frac{\min_s \pi_s}{\max_s \pi_s} \geq C$.
    Then Algorithm~\ref{alg} returns an approximation $\hat{r}_s$ that satisfies
    \[
    \hat{r}_s \geq \frac{C r_s^*}{S^{1/p}} + G
    \]
    where $G$ is the integrality gap $G = \sup_{r_s > 0} \max_{v \in V} \hat{f}(r_s, \sigma_v) - f(r_s)$.
\end{claim}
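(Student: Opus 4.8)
The plan is to decompose Algorithm~\ref{alg} into its two phases and compose their guarantees. In the first phase the algorithm (approximately) maximizes $\mathsf{Profit}_p$ over attack sequences of length at most $T$; invoking the submodular maximization machinery cited above~\cite{alaei2021maximizing, patton2023submodular}, with curvature playing the role of $p$, this returns a sequence $\hat{\mathcal{S}} = (\hat A_1, \hat B_1), \ldots, (\hat A_k, \hat B_k)$ with $\mathsf{Profit}_p(\hat{\mathcal{S}}) \geq \alpha(p)\,\max_{\mathcal{S} \in C(G),\, |\mathcal{S}| \leq T} \mathsf{Profit}_p(\mathcal{S})$. In the second phase the algorithm feeds the first two attacks of $\hat{\mathcal{S}}$ into the convex relaxation~\eqref{eq:f-rs-k} and runs bisection in $r_s$, returning (up to the gap term) the threshold $\min\{ r_s \geq 0 : \hat f(r_s, \max_v \sigma_v) > 0\}$; since $\hat f$ is monotone in $r_s$ and~\eqref{eq:f-rs-k} is a log-barrier problem over a convex set, this step is well posed and solvable by an interior point method, as already noted after~\eqref{eq:f-rs-k}. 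The proof then consists of (a) controlling phase one by a per-attack profit comparison, (b) converting that into a per-attack reward comparison using Lemma~\ref{lemma:d-lower-bound} and the Step~3 analysis, and (c) absorbing the difference between~\eqref{eq:r-star} and~\eqref{eq:f-rs-k} into the integrality gap $G$.

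For (a), I would use Claim~\ref{claim:localize-node operators} and the Step~2 claim to note that every attack $(A,B)$ occurring in $\hat{\mathcal{S}}$ or in the optimal sequence $\mathcal{S}^\star$ has $|B| \leq K S^{1/p}$ and $|A|$ bounded as in the Step~2 claim (in particular $|A| \le S$ trivially), together with the elementary sandwich $f_p(\pi, A) \in \big[(\min_s \pi_s)\,|A|^{1/p},\ (\max_s \pi_s)\,|A|^{1/p}\big]$ from~\eqref{eq:lp-profit}. The role of the phase-one guarantee is to ensure the returned sequence genuinely has a second attack carrying a constant fraction of its (strictly positive) profit; granting that, combining the lower bound at $\hat A_2$ with the upper bound at $A_2^\star$ and the hypothesis $\frac{\min_s \pi_s}{\max_s \pi_s} \geq C$ gives
\[
\frac{f_p(\pi, \hat A_2)}{f_p(\pi, A_2^\star)} \;\geq\; \frac{\min_s \pi_s}{(\max_s \pi_s)\, |A_2^\star|^{1/p}} \;\geq\; \frac{C}{S^{1/p}}.
\]

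For (b), I would show the minimal thwarting reward is, up to relaxation slack, affine-monotone in $f_p(\pi, A_2)$. Lemma~\ref{lemma:d-lower-bound} and the Step~3 analysis show that~\eqref{eq:infeasible-rebalance} becomes satisfiable once $r_s/c_s$ exceeds a threshold proportional to $\sigma_{B_2}/\sigma_{\min} \leq f_p(\pi, A_2)/\sigma_{\min}$, while below a matching threshold the rebalance cannot render $(A_2,B_2)$ infeasible; hence both $r_s^\star$ (via~\eqref{eq:r-star} for $\mathcal{S}^\star$) and the exact optimum for $\hat{\mathcal{S}}$ are $\Theta\big(f_p(\pi, A_2)\big)$ with constants depending only on $c_s, \alpha_s, \sigma_{\min}, \sigma_{\max}$, which are shared between the two programs. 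Feeding the profit ratio above into these two estimates transfers the $C/S^{1/p}$ factor from profits to rewards. Finally, since $\hat f \geq f$ pointwise — because $\mathcal{D}_s(B_1^\star, r_s, c_s)$ embeds into the scaled simplex in~\eqref{eq:f-rs-k} — the value $\hat r_s$ the algorithm reports differs from the exact root of $f$ by no more than $G = \sup_{r_s > 0}\max_{v \in V} \hat f(r_s, \sigma_v) - f(r_s)$; collecting this correction yields $\hat r_s \geq \frac{C r_s^\star}{S^{1/p}} + G$.

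The step I expect to be the main obstacle is precisely (a)$\to$(b): the reward depends only on the pair $(A_2, B_2)$, so a sequence with near-optimal total $\mathsf{Profit}_p$ can a priori have a second attack whose thwarting reward is far from $r_s^\star$. Pinning this down requires all of: the uniform size bounds $|B| = O(S^{1/p})$ and $|A| = o(S)$ from Steps~1--2, so that $f_p(\pi,\cdot)$ varies over attacks by at most a $S^{1/p}$ factor; the ratio hypothesis $\min_s\pi_s/\max_s\pi_s \geq C$; and a structural argument that the phase-one optimizer's second attack is nonempty and profit-heavy. A secondary technical point is bounding, rather than merely naming, the integrality gap $G$: showing $G$ is finite and small uses concavity of the log-barrier objective in~\eqref{eq:f-rs-k} and the fact that the extreme points of $\big((r_s-c_s)k - \sigma_{\partial s - B_1^\star}\big)\Delta^n$ are scaled standard basis vectors, which lie close to indicators of individual rebalancing operators in $\mathcal{D}_s(B_1^\star, r_s, c_s)$.
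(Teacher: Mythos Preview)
Your two-phase decomposition and your handling of the integrality gap match the paper. The divergence is in how the factor $C/S^{1/p}$ enters. You try to route it through a comparison of the \emph{second attacks} of the greedy and optimal sequences, then convert that profit ratio into a reward ratio via Lemma~\ref{lemma:d-lower-bound} and the Step~3 machinery. The paper does neither of these things. Instead it (i) observes that $\mathsf{Profit}_p$ is \emph{sequentially submodular} (each summand $f_p(\pi,A_i)-\sigma_{B_i}$ is set-submodular), so Theorem~3 of~\cite{alaei2021maximizing} applies; (ii) bounds the per-step parameter $\alpha$ in that theorem directly, via
\[
f_p(\pi,A_i)^p \;\ge\; (\min_s\pi_s)^p|A_i| \;\ge\; C^p(\max_s\pi_s)^p \;\ge\; \frac{C^p}{S}\sum_{s\in A}\pi_s^p
\]
for any $A$, giving $\alpha(p)\ge C S^{-1/p}$ and hence $E(p)=1-e^{-\alpha(p)}\ge \tfrac{C}{2S^{1/p}}$; and (iii) relies on the algorithm's final line $r_s\leftarrow r_s/E(p)$ to inject this factor into $\hat r_s$. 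The $C/S^{1/p}$ thus comes from the greedy's uniform per-step guarantee and an explicit rescaling, not from any profit-to-reward correspondence.

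This means the obstacle you flag in (a)$\to$(b) is precisely what the paper sidesteps rather than solves. Your proposed argument that both $r_s^\star$ and the exact threshold for $\hat{\mathcal S}$ are $\Theta\!\big(f_p(\pi,A_2)\big)$ is not supported by the results you cite: Step~3 supplies only a \emph{sufficient} reward level, not a matching lower bound, so the $\Theta$ is unjustified; and the Alaei guarantee controls total sequence profit, not the profit of the second attack in isolation, so the ratio $f_p(\pi,\hat A_2)/f_p(\pi,A_2^\star)$ need not inherit the $\alpha(p)$ factor. If you want to follow the paper, drop the per-attack comparison and the Step~3 detour, state sequential submodularity explicitly, derive the $\alpha$-bound as above, and let the terminal $1/E(p)$ scaling carry the multiplicative constant; the additive $G$ then enters exactly as you wrote, from $\hat f\ge f$.
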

\section{Conclusion and Future Directions}
Our results demonstrate that cascade risk in restaking networks can be bounded with incentives.
Services can utilize our results to choose a rewards $r_s$ that ensure they can only be attacked by $\Omega(S^{1/p})$-sized attacking coalitions.
This allows smaller services to choose a threat model that offers fewer rewards should their local corruption profit be low.
On the other hand, our results show that strategic node operators (such as liquid restaking tokens~\cite{walter-avs}, who aggregate stake and delegate to node operators) need to rebalance efficiently to ensure network security.
Future work includes numerical evaluation of our algorithm on live restaking networks~\cite{u1-restaking} and to account for the price volatility and liquidity of $r_s$.
The latter problem exists as services often pay rewards in their native tokens rather than in the restaked asset~\cite{llama-risk-rewards, eigenlayer-rewards, symbiotic-rewards} and have to consider the impact of price and liquidity on whether their rewards are sufficient for node operators to be profitable.

\section{Acknowledgments}
We want to thank Tim Roughgarden, Naveen Durvasula, Soubhik Deb, Sreeram Kannan, Victor Xu, Walter Li, Gaussian Process, Manvir Schneider, Theo Diamandis, Matheus Ferreria, Guillermo Angeris, Kshitij Kulkarni for helpful comments and inspiration.

% \TC{
% \begin{enumerate}
% \item Numerical simulations to measure empirical rewards for the live graph
% \item Handling volatility in rewards (which are paid in a native token and not ETH)
% \item How does this impact AVSs and LRTs? (Can cut this for WINE submission)
% \item Better bounds for step 2
% \end{enumerate}
% }
\printbibliography
\appendix

% Figure for the cascading attack
%\input{cascading_attack}
% Rebalance Figure

\section{Unions of $p$-attacks need not be valid attacks}\label{app:unions}
We first demonstrate an explicit example of a sequence $(A_1, B_1), \ldots, (A_T, B_T)$ such that for $f_{\infty}(\pi, A)$, we have
\begin{align*}
&&f_{\infty}(\pi, \cup_t A_t) &\leq \sigma_{\cup_{t} B_t} \\ 
\forall t \in [T]&& f_{\infty}(\pi, A_t) &\geq \sigma_{B_t}
\end{align*}
This will provide intuition for the example we show for general $p$.

Suppose that we have $\max_{s\in A_t} \pi_s = 1.1$ for all $t \in [T]$ and $\sigma_{B_t} = 1$ for all $t$.
Then we have
\[
\max_{s \in \cup_{t} A_t} \pi_s = \max_{t\in [T]} \max_{s \in A_t} \pi_s = 1.1
\]
On the other hand, we have $\sigma_{\cup_t B_t} = \sum_{t=1}^T \sigma_{B_t} = T$.
Therefore we have
\[
f_{\infty}(\pi, A_t) - \sigma_{B_t} = \max_{s \in A_t} \pi_s - \sigma_{B_t} = 1.1 - 1 = 0.1
\]
and
\[
f_{\infty}(\pi, \cup_t A_t) - \sigma_{\cup_t B_t} = 1.1 - T \leq 0 
\]
for $T > 1$.
We claim that a sufficient condition for the union of valid $p$-attacks to not be a valid $p$-attack is 
\[
\sum_{t\in[T]} |A_t|^{1/p} < \frac{\left(\min_v \sigma_v\right) \sum_{t\in[T]}|B_t|}{\max_{s \in \cup_t A_t} \pi_s}
\]
To see this, note that $f_p(\pi, A) \leq \left(\max_{s\in A} \pi_s\right) |A|^{1/p}$ and when this sufficient condition holds, this implies that
\[
f_p(\pi, \cup_t A_t) \leq \left(\max_{s\in \cup_t A_t} \pi_s\right) \sum_{t\in[T]}|A_t|^{1/p} \leq \left(\min_v \sigma_v\right) \sum_{t \in [T]}|B_t| \leq \sigma_{\cup_t B_t}
\]
From this, one can see that is is relatively easy to modify our example for $f_{\infty}$ to construct examples of $p$-attacks that are invalid.

\section{Overlap is easier to control than $\gamma$-security}\label{app:overlap}
We construct an example to show that the overlap $\theta_{s, t}$ is a finer means to control cascading than $\gamma$-security via an example.
Consider the restaking graph in Figure~\ref{fig:overlap} where $\theta_{1,2} = \theta_{2,1} = \sigma_{\cap}$, as the services only overlap in validator $v_{\cap}$.
Suppose that the potential attacks $(\{s_i\}, \{v_i\})$ are profitable but infeasible, \ie~
\begin{align*}
    \pi_i > \sigma_i && \sigma_i < \alpha_s (\sigma_i + \sigma_{\cap})
\end{align*}
Similarly, the potential attacks $(\{s_i\}, \{v_{\cap}\})$ are profitable and infeasible,
\ie~
\begin{align*}
\pi_i > \sigma_{\cap} && \sigma_{\cap} < \alpha_s (\sigma_i + \sigma_{\cap})
\end{align*}
On the other hand, suppose that the potential attacks $(\{s_i\}, \{v_i, v_{\cap}\})$ are unprofitable but are (tautologously) feasible, \ie~
\begin{align*}
\pi_i < \sigma_i + \sigma_{\cap} && \sigma_i + \sigma_{\cap} > \alpha_s (\sigma_i + \sigma_{\cap})
\end{align*}
Finally, suppose that the potential attack $(\{s_1, s_2\}, \{v_1, v_2, v_{\cap}\})$ has zero profit and is (tautologously) feasible, \ie~
\begin{align*}
    \pi_1 + \pi_2 = \sigma_1 + \sigma_2 + \sigma_{\cap} && \sigma_i + \sigma_{\cap} > \alpha_s (\sigma_i + \sigma_{\cap})
\end{align*}
% Overlap figure
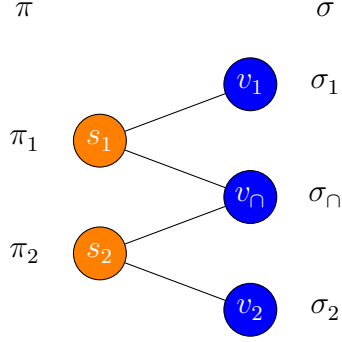
\begin{figure}
\begin{center}
\begin{tikzpicture}

  % Style for the left nodes
  \tikzstyle{left node}=[circle, draw, fill=orange, text=white, font=\bfseries, minimum size=20pt, inner sep=0pt]

  % Style for the right nodes
  \tikzstyle{right node}=[circle, draw, fill=blue, text=white, font=\bfseries, minimum size=20pt, inner sep=0pt]

  % Labels for pi column
  \node at (-1,2.25) {$\pi_1$};
  \node at (-1,0.75) {$\pi_2$};

  % Label for pi column header
  \node at (-1,4) {$\pi$};

  % Nodes in the left bipartition
  \node[left node] (L1) at (0,2.25) {$s_1$};
  \node[left node] (L2) at (0,0.75) {$s_2$};

  % Nodes in the right bipartition
  \node[right node] (R1) at (2,3) {$v_1$};
  \node[right node] (R2) at (2,1.5) {$v_{\cap}$};
  \node[right node] (R3) at (2,0) {$v_2$};

  % Labels for sigma column
  \node at (3,3) {$\sigma_1$};
  \node at (3,1.5) {$\sigma_{\cap}$};
  \node at (3,0) {$\sigma_2$};

  % Label for sigma column header
  \node at (3,4) {$\sigma$};

  % Edges
  \draw (L1) -- (R1);
  \draw (L2) -- (R3);
  \draw (L1) -- (R2);
  \draw (L2) -- (R2);

\end{tikzpicture}
\caption{Example of a graph with overlap where the overlap controls the cascading likelihood}
\label{fig:overlap}
\end{center}
\end{figure}
Now suppose let $\sigma_1 = \sigma_2 = \sigma$ and that $\sigma_{\cap} = K \sigma$ for $K \geq 1$.
When $\psi < \frac{\sigma_{\cap}}{\sigma_V}$, we have
\[
R_{\psi}(G) < \psi + \frac{\sigma_i}{\sigma_V} < \frac{1}{2+K}
\]
since $\sigma_V = \sigma_1 + \sigma_2 + \sigma_{\cap} = (K+2)\sigma$.
This is because no attack is feasible without removing $\sigma_{\cap}$ and losing $\sigma_i$ for $i\in\{1,2\}$ is not sufficient to cause a cascade.
On the other hand, note that
\[
\frac{\pi_i - \sigma}{(K+2)\sigma} < \frac{\sigma_{\cap}}{\sigma_V} = \frac{K}{K+2} < \frac{\pi_i}{(K+2)\sigma}
\]
For large $K$ and $\pi_i \leq (1-\epsilon)(K+2)\sigma$, this implies that $R_{\psi}(G) < \frac{1}{2+K}$ for $\psi < 1-\epsilon$.
Under these conditions, the graph has low cascading risk and it is completely controlled by $\sigma_{\cap}$.
This implies that local changes to $\sigma_{\cap}$ alone (without needing to adjust $\sigma_i$) are sufficient to ensure low cascade risk.
As we demonstrate in the rebalancing section of~\S\ref{sec:model}, in the presence of rebalancing, one can bound how low $\sigma_{\cap}$ can go and thus, one can bound the worst case cascade.

On the other hand, we will now compute how large $\sigma_{\cap}$ would have to be if we enforced $\gamma$-security.
Let $\epsilon_i = \pi_i - \sigma$ be the profit achieved if $v_i$ could attack $s_i$.
From the second condition, we have $\sigma_{\cap} > \pi_i - \sigma_i = \epsilon_i$.
Suppose we consider a $\gamma$-secure version of this graph where we have stakes $\sigma_1, \sigma_2, \sigma_{\cap}^{\gamma}$.
The $\gamma$-secure condition implies that
$(1+\gamma)(\pi_1 + \pi_2) < \sigma_1 + \sigma_2 + \sigma_{\cap}^{\gamma}$ which can be rearranged to
\[
\sigma_{\cap}^{\gamma} > (\pi_1 - \sigma_1) + (\pi_2 - \sigma_2) + \gamma(\pi_1 + \pi_2) = \epsilon_1 + \epsilon_2 + \gamma(\pi_1 + \pi_2) > 2\sigma_{\cap} + \gamma(\pi_1 + \pi_2)
\]
This implies that $\sigma_{\cap}^{\gamma} - \sigma_{\cap} = \Omega(\pi_1 + \pi_2)$,
suggesting that $\gamma$-security is far too strong as all services in this graph have to attract $\Omega(\pi_1 + \pi_2)$ stake.
Concretely, suppose that $\pi_1 = \$1,000$ and $\pi_2 = \$1,000,000,000$.
The $\gamma$-security condition implies that service 1, which can only be exploited for at most \$1,000 has to attract stake proportional to \$1,000,001,000 in order to be $\gamma$-secure.
Yet, if one can control only the stake held by $v_{\cap}$, one can avoid most cascades (\ie~for $\psi < \frac{\sigma_{\cap}}{\sigma_V}$) without requiring such a high amount of stake.

\section{Proof of Claim 2}\label{app:claim2}

\begin{proof}
    Let $V_s = B  \cap \partial s$ and note that $B = \bigcup_{s\in V} V_s$.
    By inclusion-exclusion, we have
    \[
    \sigma_B = \sum_{s \in S} \sigma_{V_s} - \sum_{1 \leq s < s' \leq S} \sigma_{V_s \cap V_{s'}} + \ldots = \sum_{\emptyset \neq J \subset S} (-1)^{|J|} \sigma_{\cap_{s \in J} V_s} 
    \]
    We now claim that there exists $\xi \in (0,1)$ such that $\sigma_{V_{s_1}} \cap \cdots \cap \sigma_{V_{s_k}} \leq \xi^{k-1} \sigma_{V_{s_j}}$ for any $j \in [k]$.
    By assumption, every intersection $\partial s \cap \partial t$ has at least $\delta KS^{1/p}$ more node operators than any attacking coalition $B \subset V$, since $|B| \leq KS^{1/p}$.
    This means that for any pair $s, t \in S$, we have $|\partial s \cap \partial t - B| \geq \delta KS^{1/p}$.
    Therefore, we have
    \[
    \sigma_{V_s \cap V_t} = \sigma_{(B\cap \partial s) \cap (B\cap \partial t)} \leq \left(1 - \frac{\left(\min_v \sigma_v\right) | \partial s \cap \partial t - B|}{\sigma_{\partial s \cap \partial t}}\right) \sigma_{V_s} \leq \left(1 - \frac{\delta \left(\min_v \sigma_v\right) K S^{1/p}}{\otheta_s}\right) \sigma_{V_s}
    \]
    Since $\otheta_s \leq \left(\max_v \sigma_v\right) KS^{1/p}$, we have
    $\sigma_{V_s \cap V_t} \leq \left(1 - \frac{\delta \left(\min_v \sigma_v\right)}{ \left(\max_v \sigma_v\right)}\right) \sigma_{V_s} = \xi \sigma_{V_s}$,
    where we defined $\xi = 1 - \frac{\delta \left(\min_v \sigma_v\right)}{\left(\max_v \sigma_v\right)}$.
    One can recurse this argument to get $\sigma_{V_{s_1} \cap V_{s_2} \cdots \cap V_{s_k}} \leq \xi^{k-1} \sigma_{V_{s_i}}$.
    Therefore, we have
    \begin{align*}
    \sigma_B  &= \sum_{s \in S} \sigma_{V_s} - \sum_{1 \leq s < s' \leq S} \sigma_{V_s \cap V_{s'}} + \ldots = \sum_{\emptyset \neq J \subset S} (-1)^{|J|} \sigma_{\cap_{s \in J} V_s} \\
              &\geq \sum_{s\in S} \sigma_{V_s} - (S-1) \xi \sigma_{V_s} + \sum_{s, s', s''} \sigma_{V_s \cap  V_{s'} \cap V_{s''}} - \binom{S-1}{3} \xi^3 \sigma_{V_s}   - \cdots \\
              &\geq \sum_{s\in S}\sigma_{V_s} \left(1 - \sum_{i=0}^{\lfloor \frac{S-1}{2} \rfloor} \binom{S-1}{2i + 1} \xi^{2i+1} \right) \geq \left(1 - \sum_{i=0}^{\lfloor \frac{S-1}{2} \rfloor} \left(\frac{e (S-1)}{2i + 1}\right)^{2i+1} \xi^{2i+1} \right) \sum_{s\in S} \sigma_{V_s}
    \end{align*}
    where the last step uses the binomial bound $\binom{n}{k} \leq \left(\frac{en}{k}\right)^k$.
    Note that $\xi = 1 - \frac{\delta\left(\min_v \sigma_v\right)}{\left(\max_v \sigma_v\right)} \leq \frac{(S-1)^{2/p}}{e(S-1)^2}$, by assumption. This implies that $\sum_{i=0}^{\lfloor \frac{S-1}{2} \rfloor} \left(\frac{e (S-1)}{2i + 1}\right)^{2i+1} \xi^{2i+1} \leq \frac{C}{S-1}$ for a constant $C < 1$.
This implies for a costly attack with profit function $f_p$ that we have
\[
\left(1-\frac{C}{S-1}\right)\left(\min_v \sigma_v\right) |A| \leq \sigma_B \leq f_p(\pi, A) \leq \left(\max_{s} \pi_s\right) |A|^{1/p}
\]
which implies the claim $|A|^{1-1/p} \leq \frac{\max_s \pi_s}{\min_v \sigma_v} \frac{2}{\left(1-\frac{C}{S-1}\right)} = \frac{\max_s \pi_s}{\min_v \sigma_v} \frac{2(S-1)}{S-1-C} \leq  \frac{\max_s \pi_s}{\min_v \sigma_v} \frac{2(S-1)}{S-2}$
\end{proof}

\section{Approximation Algorithm}\label{app:algo}
\begin{algorithm}
\caption{Compute Approximately Optimal Rewards}\label{alg}
\SetKwInOut{Input}{input}\SetKwInOut{Output}{output}

\Input{Incentivized restaking graph: $G = (S, V, E, \alpha, \sigma, \pi, r, c)$ \\
Upper bound on maximal attack sequence length: $T \in \naturals$ \\
Curvature of submodularity: $p \in (1, \infty)$
}
\Output{Approximately Optimal Reward Vector: $\hat{r} \in \reals^S_+$}
$(\hat{A}_1, \hat{B}_1), \ldots, (\hat{A}_k, \hat{B}_k) \leftarrow \mathsf{Greedy}(\mathsf{Profit}_p, T)$ (See Algorithm \ref{alg2})\;
$r \leftarrow \emptyset$\;
\For {$s \in S$}{
$\mathcal{D} \leftarrow \emptyset$\;
$\mathsf{maxSoFar} \leftarrow -\infty$ \;
$r_s \leftarrow \frac{\sigma_{\partial s - \hat{B}_1}}{\max_{v \in V} \sigma_v} + c_s$\;
\While{$\mathsf{maxSoFar} < 0$} { 
\For {$v \in V$} {
     // Solve for $\hat{f}$ with an interior point method\\
     $\mathsf{maxSoFar} \leftarrow \max(\hat{f}(r_s, \sigma_v), \mathsf{maxSoFar})$\; 
}
\If{$\mathsf{maxSoFar} \geq 0$} {
break\;
}
$r_s \leftarrow 2r_s$\;
}
}
\For{$r_s \in r$} {
// Increase reward by profit approximation error \\
$r_s \leftarrow \frac{r_s}{E(p)}$\;
}
\Return $r$\;
\end{algorithm}
Algorithm~\ref{alg} proceeds to output a set of rewards $r_s$ using the following steps:
\begin{enumerate}
\item Let $C(G) = \{(A_1, B_1), \ldots, (A_T, B_T) : (A_i, B_i) \in (S-\cup_{j=1}^{i-1} A_i, V-\cup_{j=1}^{i-1} B_i)\}$ be the set of possible attack sequences on a restaking graph $G$ and let $C_T(G) = \{ s \in C(G) : |s| \leq T\}$ be the set of sequences of length at most $T$.
\item For a sequence $(A_1, B_1), \ldots, (A_T, B_T) \in C_T(G)$, define the net profit function
\begin{equation}\label{eq:profit-function}
\mathsf{Profit}_p((A_1, B_1), \ldots, (A_T, B_T)) = \sum_{t=1}^T f_p(\pi, A_i) - \sigma_{B_i}
\end{equation}
\item Given an upper bound $T \in \naturals$ on the attack length, utilize a greedy algorithm to find $(\hat{A}_1, \hat{B}_1), \ldots, (\hat{A}_k, \hat{B}_k) \in C_{T}(G)$, $k \in [T]$, such that
\begin{equation}\label{eq:profit-poa}
\mathsf{Profit}((\hat{A}_1, \hat{B}_1), \ldots, (\hat{A}_k, \hat{B}_k)) \geq \alpha(p) \max_{s \in C_T(G)} \mathsf{Profit}(s)
\end{equation}
where the approximation factor $\alpha(p)$ only depends on the choice of $p$ and bounds on $\min_s \pi_s, \max_s \pi_s$
\item Given this sequence, compute a rebalance $D$ and rewards $\hat{r}_s$ that ensure that $(\hat{A}_2, \hat{B}_2)$ is infeasible after rebalancing
% \item If the conditions on $\otheta_s, \utheta_s$ do not hold, utilize Claim~\ref{claim:otheta-ub}  to choose node operator rewards $\hat{r}_{sv}$ such that~\eqref{eq:overlap-condition} condition holds
\item Return rewards $r_{sv} = \frac{\hat{r}_{sv}}{1-e^{-\alpha(p)}}$
\end{enumerate}
We claim such rewards will incentive a graph $G$ with $R_{\psi}(G) \leq \psi + \frac{C}{S^{1-1/p}}$ while only being $G\alpha(p)$-times worse than the minimum rewards needed to achieve such a graph in the worse case, where $G$ is the integrality gap.

\paragraph{Sequential Submodularity.}
We first claim that the profit function~\eqref{eq:profit-function} is a \emph{sequentially submodular} function~\cite{alaei2021maximizing, balkanski2018adaptive}.
To define a sequentially submodular function, we consider sets of sequences $ \mathcal{S}^{\infty} = \bigcup_{k\in\naturals} \{(s_1, \ldots, s_k) : s \in \mathcal{S} \}$ of a set $\mathcal{S}$.
We define a partial order $<$ for $A, B \in \mathcal{S}^{\infty}$ where $A < B$ iff $A$ is a subsequence of $B$.
Moreover, given two sequences $A = (s_1, \ldots, s_k), B = (t_1, \ldots, t_j) \in \mathcal{S}^{\infty}$, we define the concatenation $A \perp B = (s_1, \ldots, s_k, t_1, \ldots, t_j) \in \mathcal{S}^{\infty}$. 
A function $f : \mathcal{S}^{\infty} \rightarrow \reals$ is sequentially submodular if for all $A < B$ and $C \in \mathcal{S}^{\infty}$ we have
\[
f(B \perp C) - f(B) \leq f(A \perp C) - f(A)
\]
Note that this is a diminishing marginal utility condition, much like the standard set submodularity definition, $f(S\cup T) + f(S\cap T) \leq f(S) + f(T)$.
In our scenario, we have $\mathcal{S}^{\infty} = C(G)$ and note that $\mathsf{Profit}$ is a sequentially submodular function since each term $f_p(\pi, A) - \sigma_B$ is set submodular (as it is the sum of two submodular functions).-

\begin{algorithm}
\caption{Greedy Sequential Submodular Optimization (Algorithm 3 of~\cite{alaei2021maximizing})}\label{alg2}
\SetKwInOut{Input}{input}\SetKwInOut{Output}{output}

\Input{Sequentially submodular function $f : S^{\infty} \rightarrow \reals$, \\
Time horizon $T \in \naturals$
}
\Output{$(A_1, B_1), \ldots, (A_k, B_k)$ with $k < T$}
$(\hat{A}_1, \hat{B}_1), \ldots, (\hat{A}_k, \hat{B}_k) \leftarrow \mathsf{Greedy}(\mathsf{Profit}_p, T)$\;
$n \leftarrow$ length of $S$\;
Initialize $t \leftarrow 0; i \leftarrow 1; H \leftarrow \emptyset$;
\While{$t <T$}{
    Find $s_i \in \mathcal{S}$ such that $u(H \perp s_i) - u(H) \geq \alpha \max_{s \in S} u(H \perp s) - u(H)$ \\
    $H \leftarrow H \perp s_i$
}
\Return $H$\;
\end{algorithm}

\paragraph{Approximation Error.}
We next recall a theorem from~\cite{alaei2021maximizing} that demonstrates that a greedy algorithm has low approximation error for sequentially submodular functions:
\begin{theorem}{\cite[Thm. 3]{alaei2021maximizing}}\label{thm:alaei}
Suppose $f : \mathcal{S}^{\infty} \rightarrow \reals$ is a sequentially submodular function.
Suppose that for a history $S = (s_1, \ldots, s_T) \in \mathcal{S}^{\infty} $that $f$ satisfies the following for all $t \in [T-1]$
\[
f((s_1, \ldots, s_t) \perp s_{t+1}) - f(s_1, \ldots, s_t) \geq \alpha \max_{s \in \mathcal{S}} f((s_1, \ldots, s_t) \perp s) - f((s_1, \ldots, s_t))
\]
for $\alpha > 0$.
Then we have
\[
\frac{f((s_1, \ldots, s_T))}{\max_{s \in \mathcal{S^{\infty}}} f(s)} \geq 1 - \frac{1}{e^{\alpha}}
\]
where $(s_1, \ldots, s_T)$ is the history generated by the greedy algorithm~\ref{alg2}
\end{theorem}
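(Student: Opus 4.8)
The plan is to run the classical Nemhauser–Wolsey–Fisher greedy analysis, adapted to the subsequence order $<$ and the concatenation operation $\perp$. Write $\mathrm{OPT} = \max_{s \in \mathcal{S}^\infty} f(s)$, fix an optimizer $S^\star = (s^\star_1, \ldots, s^\star_m)$ attaining it, and let $G_t = (s_1, \ldots, s_t)$ denote the prefix produced by greedy after $t$ steps (so $G_0 = \emptyset$ and $G_T$ is the returned history). Using $f(\emptyset) = 0$ (which holds in the application, since the empty sequence has zero profit) and setting $\delta_t = \mathrm{OPT} - f(G_t)$, the goal reduces to establishing the one-step contraction $\delta_{t+1} \le (1 - \alpha/m)\,\delta_t$; unrolling this and invoking the elementary bound $(1-\alpha/m)^m \le e^{-\alpha}$ together with $T \ge m$ then finishes the proof.

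The heart of the argument is a telescoping bound that converts the value of the optimal continuation into single-move marginals. First I would write, with $P_j = (s^\star_1, \ldots, s^\star_j)$ and $P_0 = \emptyset$,
\[
f(G_t \perp S^\star) - f(G_t) = \sum_{j=1}^m \bigl[\, f(G_t \perp P_j) - f(G_t \perp P_{j-1}) \,\bigr].
\]
For each $j$ I apply sequential submodularity with $A = G_t$, $B = G_t \perp P_{j-1}$ (so $A < B$), and $C = s^\star_j$, which gives
\[
f(G_t \perp P_{j-1} \perp s^\star_j) - f(G_t \perp P_{j-1}) \le f(G_t \perp s^\star_j) - f(G_t) \le \max_{s \in \mathcal{S}} \bigl[\, f(G_t \perp s) - f(G_t) \,\bigr].
\]
Summing the $m$ terms yields $f(G_t \perp S^\star) - f(G_t) \le m \cdot \max_{s}\bigl[ f(G_t \perp s) - f(G_t)\bigr]$.

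Next I would combine this with the two remaining ingredients. Monotonicity of $f$ in the subsequence order, which I take as part of the hypotheses and which holds here because each appended valid attack contributes the nonnegative marginal $f_p(\pi, A) - \sigma_B$, gives $f(G_t \perp S^\star) \ge f(S^\star) = \mathrm{OPT}$, so the telescoping bound becomes $\max_s[f(G_t \perp s) - f(G_t)] \ge \frac{1}{m}\,\delta_t$. The approximate-greedy hypothesis at step $t+1$ then yields $f(G_{t+1}) - f(G_t) \ge \alpha \max_s[f(G_t \perp s) - f(G_t)] \ge (\alpha/m)\,\delta_t$, i.e. $\delta_t - \delta_{t+1} \ge (\alpha/m)\,\delta_t$, which is precisely the claimed contraction. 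Iterating from $\delta_0 = \mathrm{OPT}$ and using $T \ge m$ gives $\delta_T \le (1-\alpha/m)^T\,\mathrm{OPT} \le (1-\alpha/m)^m\,\mathrm{OPT} \le e^{-\alpha}\,\mathrm{OPT}$, hence $f(G_T) \ge (1 - e^{-\alpha})\,\mathrm{OPT}$.

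The main obstacle I anticipate is not the contraction recursion but the bookkeeping around the optimal sequence length and the monotonicity assumption. Since greedy runs for $T$ steps while the denominator ranges over all of $\mathcal{S}^\infty$, I must argue there is an optimizer of length $m \le T$ (a $T$-step greedy cannot compete with an arbitrarily long sequence otherwise); in our setting this follows because appending further valid attacks only adds nonnegative marginals and the feasible attack sequences $C(G)$ have bounded length, so the maximum is attained at length $\le T$. The second delicate point is verifying that the correct instances $A < B$ and $C$ are supplied to the sequential-submodularity inequality at every telescoping step: getting the prefix/suffix roles right, so that $s^\star_j$ is appended to the growing base $G_t \perp P_{j-1}$ rather than inserted elsewhere, is exactly where a careless proof would fail.
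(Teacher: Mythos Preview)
The paper does not prove this theorem; it is merely \emph{recalled} from \cite{alaei2021maximizing} (the sentence introducing it reads ``We next recall a theorem from~\cite{alaei2021maximizing}\ldots''), and the paper immediately uses its conclusion without supplying any argument. So there is no in-paper proof to compare your attempt against.

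For what it is worth, your argument is the standard Nemhauser--Wolsey--Fisher greedy analysis transported to the sequential setting, and it is the expected proof of such a statement. Two caveats you correctly flag yourself: (i) monotonicity of $f$ in the subsequence order is not part of the theorem as stated here, so you are implicitly strengthening the hypotheses (justified in the paper's application since each term $f_p(\pi,A_i)-\sigma_{B_i}$ is nonnegative on valid attacks); and (ii) the bound $T \ge m$ on the optimizer's length is likewise an extra assumption you smuggle in, again reasonable in context because $C(G)$ has bounded length and the algorithm is given the horizon $T$ as input. With those standing assumptions your telescoping-plus-contraction argument is correct and is precisely what one would find in the cited reference.
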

\noindent This theorem implies that if we can find a lower bound on the marginal increase, then we can bound the worst case approximation error.
We now claim that if $\frac{\min_s \pi_s}{\max_s \pi_s} \geq K$ and as $|A_t| \geq 1$, then~\eqref{eq:profit-poa} holds with 
\[
\alpha(p) \geq KS^{-\frac{1}{p}}
\]
To see this, note that
\[
f_p(\pi, A_i)^p = \sum_{s \in A_i} \pi_s^p \geq (\min_s \pi_s)^p |A_i| \geq K^p (\max_s \pi_s)^p \geq \frac{K^p (\max_s \pi_s)^p}{S} S \geq \frac{K^p}{S} \sum_{s\in A}\pi_s^p
\]
for any other set $A$.
This implies that
\[
f_p(\pi, A_i) \geq \frac{K}{S^{1/p}} \max_A f_p(\pi, A)
\]
which implies that $\alpha(p) \geq \frac{K}{S^{1/p}}$

Generally, we will have $KS^{-1/p} \leq \sigma_{\max} V$.
This implies that $\alpha(p)$ increases as $p$ increases.
Using Theorem~\ref{thm:alaei}, this implies that the greedy algorithm of~\cite{alaei2021maximizing} achieves an approximation error of
\[
E(p) = 1-e^{-\alpha(p)} = 1-e^{-KS^{-1/p}} \geq \frac{K}{2S^{1/p}}
\]
where the last inequality holds when $K/S^{1/p} \leq \frac{1}{2}$
This error is highest when $p = 1$ and lowest when $p = \infty$.
As such, a service can view choosing $p \in [1, \infty)$ as choosing a security level (\eg~secure up to node operator cartels of size $O(S^{1/p})$) and then utilize our algorithm to choose the rewards to ensure security.
These rewards are guaranteed to be within $O(\frac{1}{E(p)})$ of the minimum possible rewards needed to achieve security but can be easily computed.

Since Algorithm~\ref{alg} optimizes $\hat{f}(r_s, k)$ instead of $f(r_s)$, it computes an overestimate of the optimal rewards if $f(r_s)$ is feasible.
This overestimate is bounded by the integrality gap $G =\max_{r_s > 0} \max_{v\in V} f(r_s, \sigma_v) - f(r_s)$, which gives the additive term in the approximation error.
\end{document}